\setlist{nolistsep}
\colorlet{drkblue}{blue!61.8!black}
\theoremstyle{remark} \newtheorem{remark}{Remark}
\theoremstyle{plain} \newtheorem{lemma}{Lemma}
\theoremstyle{remark} 
\theoremstyle{plain} 
\newtheorem{example}{Example}
\newtheorem{theorem}{Theorem}
\newtheorem{corollary}{Corollary}
\theoremstyle{plain} \newtheorem{proposition}{Proposition}
\title{Lexicographic Choice Under Variable Capacity Constraints\thanks{Battal Do\u{g}an gratefully acknowledges financial support from  the British Academy/Leverhulme Trust through grant SR1819\textbackslash 190133 and the Swiss National Science Foundation (SNSF) through grant 100018\_ 162606. Kemal Y{\i}ld{\i}z gratefully acknowledges financial support from the Scientific and Research Council of Turkey (TUBITAK). 
We thank Mustafa O\u{g}uz Afacan, Samson Alva, Bettina Klaus, Bumin Yenmez, seminar participants at ITU Matching Workshop, Tel Aviv University, BRIC London 2017, and anonymous referees for valuable comments.}}
\author{Battal Do\u{g}an\thanks{Department of Economics, University of Bristol; battal.dogan@bristol.ac.uk.} \and Serhat Do\u{g}an \thanks{Department of Economics, Bilkent University; dserhat@bilkent.edu.tr.}\and Kemal Y{\i}ld{\i}z\thanks{Department of Economics, Bilkent University; kemal.yildiz@bilkent.edu.tr.}}
\date{\today}
\begin{document}

\maketitle

\begin{abstract}

In several matching markets, in order to achieve diversity, agents' priorities are allowed to vary across an institution's available seats, and the institution is let to choose agents in a lexicographic fashion based on a predetermined ordering of the seats, called a \textit{(capacity-constrained) lexicographic choice rule}. We provide a characterization of lexicographic choice rules and a characterization of deferred acceptance mechanisms that operate based on a lexicographic choice structure under variable capacity constraints.\@ We discuss some implications for the Boston school choice system and show that our analysis can be helpful in applications to select among plausible choice rules.\\

\noindent \textit{JEL} Classification Numbers: C78, D47, D78.\\
\noindent Keywords: Choice rules, lexicographic choice, deferred acceptance, diversity. 
\end{abstract}
\newpage

\newpage

\section{Introduction}

Many real-life resource allocation problems involve the allocation of an object that is available in a limited number of identical copies, called the \textit{capacity} of the object. Choice rules, which are systematic ways of rationing available copies of an object when demand exceeds the capacity, are essential in the analysis of such problems. A well-known example is the school choice problem in which each school has a certain number of seats to be allocated among students.\@ Although student preferences are elicited from the students, endowing each school with a choice rule is an essential part of the design process.  

Which choice rule to use is not always evident. The school choice literature, starting with the seminal study by \cite{as2003}, has widely focused on problems where each school is already endowed with a priority ordering over students and chooses the highest priority students up to the capacity. Such a choice rule, which is merely responsive to a given priority ordering, is called a \textit{responsive choice rule}.\footnote{In Appendix~\ref{responsive}, we discuss responsive choice rules.} However, when there are additional concerns such as achieving a diverse student body or affirmative action, which choice rule to use is non-trivial.\@ For example in the Boston school choice system, although each school is still endowed with a priority ordering over students and respecting student priorities is still a concern, schools would like to promote the neighborhood students as well by sometimes letting them override the priorities of students who are not from the neighborhood. Such an objective can obviously not be achieved with a responsive choice rule. 

The affirmative action policies that are in use in several school districts\footnote{In order to achieve a diverse student body, many school districts have been implementing affirmative action policies, such as in Boston, Chicago, and Jefferson County.} reveal that a natural way to achieve diversity is to allow students' priorities to vary across a school's seats, and to let the school choose students in a lexicographic fashion based on a predetermined ordering of the seats.\@ We call these rules \textit{(capacity-constrained) lexicographic choice rules}.\footnote{These choice rules are simply called \textit{lexicographic choice rules} in the recent market design literature. We introduce these choice rules using the \textit{capacity-constrained lexicographic choice} terminology to differentiate them from other lexicographic choice rules without capacity constraints which have been studied in the choice theory literature. Although we omit the ``capacity-constrained" part for simplicity in most part of the paper, we include it in the statements of our results.} To be more precise, a lexicographic choice rule specifies an ordering of the seats and assigns a priority ordering to each seat, which can be interpreted as the criterion based on which that particular seat will be assigned. At each choice set, the highest priority student according to the priority ordering at the first seat is chosen, then the highest priority ordering among the remaining students according to the priority ordering at the second seat is chosen, and so on until the last seat is assigned or no student is left.\@ Although some properties of lexicographic choice rules have already been studied in the literature, which set of properties distinguish lexicographic choice rules from other plausible choice rules has so far not been studied.\footnote{Although lexicographic choice rules are used to achieve diversity in school choice, there are other plausible choice rules that are also used, or can be used, to achieve diversity or affirmative action. Among others, \cite{ey2013} and \cite{ehyy2012} study some of those choice rules.} In this study, we follow the axiomatic approach and discover general principles (axioms) that characterize \textit{lexicographic choice rules} under \textit{variable capacity constraints}.\footnote{\cite{ey2013} also follow an axiomatic approach and characterize several choice rules for a school that wants to achieve diversity.}

In our baseline model, we consider a single decision maker who has a capacity constraint, such as a school with a limited number of seats. The decision maker encounters choice problems which consist of a choice set (a set of alternatives, such as students who demand a seat at the school) and a capacity. A choice rule, at each possible choice problem, chooses some alternatives from the choice set without exceeding the capacity. Note that across different choice problems, we allow capacity to vary, since in applications capacity may vary and the choice rule may need to be responsive to changes in capacity.\footnote{There are earlier studies in the literature which also formulate choice rules by allowing capacity to vary. See, among others, \cite{dogan_klaus_2016}, \cite{ehlers_klaus_2014a}, and \cite{ehlers_klaus_2014b}.} One example is when the number of available seats at a school may change from year to year. In fact, even during the same admissions year, a school may face two different choice problems with different capacities. In most of the existing school choice systems, such as New York City and Boston, there is a second stage of admissions including those students and school seats that are unassigned at the end of the first stage.\footnote{The new school choice system in Chicago also has two stages of admissions. See \cite{dogan_yenmez} for an analysis of the new system in Chicago.}

We consider the following three properties of choice rules that have already been studied in the axiomatic literature. 

\textit{Capacity-filling:\footnote{In the matching literature, \textit{capacity-filling} is also referred to as \textit{acceptance}, although the \textit{capacity-filling}
terminology has been increasingly popular in the recent literature.}} An alternative is rejected from a choice set at a capacity only if the capacity is full; 

\textit{Gross substitutes:} If an alternative is chosen from a choice set at a capacity, then it is also chosen from any subset of the choice set that contains the alternative, at the same capacity.

\textit{Monotonicity:} If an alternative is chosen from a choice set at a capacity, then it is also chosen from the same choice set at any higher capacity.

We introduce a new property called \textit{the irrelevance of accepted alternatives}.\@ \textit{The irrelevance of accepted alternatives} requires that, if the set of rejected alternatives is the same for two choice sets at the same capacity, then at any higher capacity, the set of accepted alternatives that were formerly rejected should be the same for the two choice sets.\@ In other words, in case of an increase in the capacity, \textit{the irrelevance of accepted alternatives} requires that the new alternatives that will be chosen (if any) should not depend on the already accepted alternatives. In Theorem~\ref{thm1}, we show that a choice rule satisfies \textit{capacity-filling}, \textit{gross substitutes}, \textit{monotonicity}, and \textit{the irrelevance of accepted alternatives} if and only if it is \textit{lexicographic}:  there exists a list of priority orderings over potential alternatives such that at each choice problem, the set of chosen alternatives is obtainable by choosing, first,  the highest ranked alternative according to the first priority ordering, then choosing the highest ranked alternative among the remaining alternatives according to the second priority ordering, and proceeding similarly until the capacity is full or no alternative is left.

Besides providing a first axiomatic foundation for lexicographic choice rules under variable capacity constraints, we also analyze the market design implications of lexicographic choice rules.\@ In Section~\ref{allocation_model}, we consider the variable-capacity object allocation model where there is more than one object (such as many schools) and agents have preferences over objects (such as students having preferences over schools).\@ In that model, \cite{ehlers_klaus_2014b} characterize deferred acceptance mechanisms where each object has a choice rule that satisfies \textit{capacity-filling}, \textit{gross substitutes}, and \textit{monotonicity}.\footnote{\cite{kojima_manea_2010} consider a setup where the capacity of each school is fixed, and characterize deferred acceptance mechanisms where each school has a choice rule that satisfies \textit{capacity-filling} and \textit{gross substitutes}.} Motivated by \textit{the irrelevance of accepted alternatives} for choice rules, we introduce a new property for allocation mechanisms, called \textit{the irrelevance of satisfied demand}. Consider an arbitrary problem and the allocation chosen by the mechanism at that problem. Suppose that the capacity of an object is increased.\@ Now, some of the agents who prefer that object to their assignments at the initial allocation may receive the object due to the capacity increase.\@ \textit{The irrelevance of satisfied demand} requires that the set of agents who receive the object due to the capacity increase does not depend on the set of agents who initially receive the object. We show that there is no mechanism which satisfies \textit{the irrelevance of satisfied demand} together with some other desirable properties studied in  \cite{ehlers_klaus_2014b} (Proposition \ref{impossibility}).\@ In particular, lexicographic deferred acceptance mechanisms, which are deferred acceptance mechanisms that operate based on a lexicographic choice structure, violate \textit{the irrelevance of satisfied demand}, which stands in contrast to lexicographic choice rules satisfying \textit{the irrelevance of accepted alternatives}. However, we show that a weaker version of \textit{the irrelevance of satisfied demand}\textendash which requires the same 
 at any problem where there is only one available object\textendash  
  characterizes  lexicographic deferred acceptance mechanisms together with the desirable properties studied in  \cite{ehlers_klaus_2014b} (Proposition \ref{ek_corollary}).  

\cite{ks2012} study lexicographic deferred acceptance mechanisms in a more general matching with contracts framework \citep{hm2005}.\@ In some applications, the choice rule of an institution is subject to  \textit{a feasibility constraint}, in the sense that some alternatives cannot be chosen together with some other alternatives. The matching with contracts model due to \cite{hm2005} introduced a general framework that incorporates such feasibility constraints into the matching problem. Although for the school choice application, where such feasibility constraints are not binding, the lexicographic choice rules in \cite{ks2012} fall into our baseline model, in case of binding feasibility constraints, their lexicographic choice rules are not covered in our baseline analysis.\footnote{For instance, the lexicographic choice rules in their setup may violate ``substitutability'', which is a generalization of gross substitutes to the matching with contracts setup \citep{hm2005}.} In Section \ref{feasibility}, we show that our baseline model and our baseline properties can be extended to a setup with feasibility constraints, highlighting the distinguishing properties of capacity-constrained lexicographic choice rules, including the ones discussed in \cite{ks2012}, in a more general setup. 

Boston school district is one of the school districts that uses 
capacity-constrained lexicographic choice to achieve a diverse student body and implement affirmative action policies.
Boston school district aims to give priority to neighborhood applicants for half of each school's seats.\@ To achieve this goal, the Boston school district has been using a deferred acceptance mechanism based on a choice structure, where each school is endowed with a ``capacity-wise lexicographic'' choice rule, that is, at each capacity, the choice rule lexicographically operates based on a list containing as many priority orderings as the capacity, yet the lists for different capacity levels do not have to be related in any way.\footnote{See \cite{dur2016published} for a detailed discussion of Boston's school choice mechanism.} 
\cite{dur2016} and \cite{dur2016published} analyse how the order of the priority orderings in the choice rule of a school may cause additional bias for or against the neighbourhood students.\footnote{\cite{dur2016} is an earlier version of \cite{dur2016published}.} In Section~\ref{boston}, we consider a class of capacity-wise lexicographic choice rules discussed in \cite{dur2016} that are relevant for the design of the Boston school choice system and show that our analysis enables us to single out one rule from four plausible candidates.

The paper is organized as follows. In Section \ref{literature}, we review the related literature. In Section~\ref{capacity_constrained_choice}, we introduce and characterize lexicographic choice rules, show that our baseline model and our baseline properties can be extended to a setup with feasibility constraints, and also provide a characterization of responsive choice rules. In Section~\ref{allocation_model}, we highlight an implication of our choice theoretical analysis for the resource allocation framework: we provide a  characterization of deferred acceptance mechanisms that operate based on a lexicographic choice structure. In Section~\ref{boston}, we discuss some implications for the Boston school choice system. In Section~\ref{discussion}, we conclude by discussing the main features of our analysis. 

\section{Related Literature}
\label{literature}

Several studies investigate choice rules that satisfy \textit{path independence} \citep{plott}, which requires that if the choice set is ``split up'' into smaller sets, and if the choices from the smaller sets are collected and a choice is made from the collection, the final result should be the same as the choice from the original choice set.
Since \textit{capacity-filling} together with \textit{gross substitutes} imply \textit{path independence},\footnote{This is also noted in Remark 1 of \cite{dogan_klaus_2016}, and it follows from Lemma 1 of \cite{ehlers_klaus_2014b} together with Corollary 2 of \cite{AizermanMalishevski81}.} lexicographic choice rules are examples of path independent choice rules. \cite{AizermanMalishevski81} show that for each path independent choice rule, there exists a list of priority orderings such that the choice from each choice set is the union of the highest priority alternatives in the priority orderings.\footnote{In the words of \cite{AizermanMalishevski81}, each \textit{path independent} choice rule is generated by some mechanism of collected extremal choice.}
Among others, \cite{plott}, \cite{moulin_1985}, and \cite{jd_2001} study the structure of path independent choice rules. Path independent choice rules guarantee the existence of stable matchings in the matching context.\@ \cite{chambers_yenmez_2017} study path independence in the matching context and its connection to stable matchings.

Although the structure of path independent choice rules have been extensively studied, the structure of lexicographic choice rules and what properties distinguish them from other path independent choice rules have not been well-understood. \cite{houy_tadenuma} consider two classes of choice rules which are both based on ``lexicographic procedures'', yet different than the ones we consider here. Similar to our setup, choice rules that they consider operate based on a list of binary relations.\footnote{\cite{houy_tadenuma} do not start with any assumptions on the list of binary relations. They separately discuss under which assumptions on the list of binary relations, the resulting choice rules satisfy certain properties.} Yet, their model does not include capacity constraints and the lexicographic procedures that operationalize the lists are different. The only study that considers lexicographic choice rules that we study from an axiomatic perspective is \cite{chambers_yenmez_2018b}. They show that lexicographic choice rules satisfy \textit{capacity-filling} and \textit{path independence}, and they also show that there are \textit{path independent} choice rules that are not lexicographic, but they do not provide a characterization of lexicographic choice rules.

Our analysis of the Boston school choice system is related to \cite{dur2016published} and the working paper version \cite{dur2016}. \cite{dur2016} compare alternative choice rules for schools in the Boston school district (one of which is the one used in the Boston school district)  in terms of how much they are biased for or against the neighbourhood students. We consider these alternative choice rules from a different perspective. In Section \ref{boston}, we show that, although these choice rules are all based on a ``lexicographic procedure" at each capacity, only one of them satisfies all the characterizing properties in Theorem \ref{thm1}, and therefore only one of them is actually a \textit{(capacity-constrained) lexicographic choice rule}. The common feature of \cite{dur2016published} and our Section \ref{boston} is that we both consider lexicographic choice procedures in the context of school choice in Boston.\@ The main difference is that, although the choice rules that \cite{dur2016published} consider have direct counterparts in a variable capacity context, their analysis pertains to the fixed capacity case. In particular, given a fixed school capacity, \cite{dur2016published} analyze how different lexicographic choice procedures perform. On the other hand, variable capacities, and properties related to variable capacities, are at the heart of our study. We show that, one of our variable capacity properties, \textit{CWARP}, is satisfied by only one of the four choice rules discussed in \cite{dur2016}.

\section{Capacity-Constrained Lexicographic Choice}
\label{capacity_constrained_choice}

Let $A$ be a nonempty finite set of $n$ alternatives and let $\mathcal{A}$ denote the set of all \textit{nonempty} subsets of $A$. A (capacity-constrained) choice \textbf{problem} is a pair $(S,q)\in \mathcal{A} \times \{1,\ldots ,n\}$ of a choice set $S$ and a capacity $q$. A (capacity-constrained) \textbf{choice rule} $C: \mathcal{A} \times \{1,\ldots ,n\} \rightarrow \mathcal{A}$ associates with each problem $(S,q)\in \mathcal{A} \times \{1,\ldots ,n\}$, a set of choices $C(S,q)\subseteq S$ such that $|C(S,q)|\leq q$. Given a choice rule $C$, we denote the set of rejected alternatives at a problem $(S,q)$ by $R(S,q)=S\setminus C(S,q)$.\medskip

A \textbf{priority ordering} $\succ$ is a complete, transitive, and anti-symmetric binary relation over $A$. A \textbf{priority profile} $\pi=(\succ_1, \ldots ,\succ_n)$ is an ordered list of $n$ priority orderings. Let $\Pi$ denote the set of all priority profiles.

A choice rule $C$ is \textbf{(capacity-constrained) lexicographic for a priority profile} $(\succ_1, \ldots ,\succ_n)\in \Pi$ if for each $(S,q)\in \mathcal{A} \times \{1,\ldots ,n\}$, $C(S,q)$ is obtained by choosing the highest $\succ_1$-priority alternative in $S$, then choosing the highest $\succ_2$-priority alternative among the remaining alternatives, and so on until $q$ alternatives are chosen or no alternative is left. A choice rule is \textbf{(capacity-constrained) lexicographic} if there exists a priority profile for which the choice rule is lexicographic.\medskip

\begin{remark}
Note that, if a choice rule is lexicographic for a priority profile $\pi=(\succ_1, \ldots ,\succ_n)$, then it is lexicographic for any other priority profile that is obtained from $\pi$ by replacing $\succ_n$ with an arbitrary priority ordering. In that sense, the last priority ordering is redundant. 
\end{remark}

We consider four properties of choice rules. The following three properties are already known in the literature. 

\noindent \textbf{Capacity-filling:} An alternative is rejected from a choice set at a capacity only if the capacity is full. Formally, for each $(S,q)\in \mathcal{A} \times \{1,\ldots ,n\}$, $$|C(S,q)|=\min\{|S|,q\}.$$

\noindent \textbf{Gross substitutes:}\footnote{\textit{Gross substitutes} was first introduced in the choice literature by \cite{chernoff}. It has been studied in the choice literature under different names such as \textit{Chernoff's axiom}, \textit{Sen's $\alpha$}, or \textit{contraction consistency}. In the matching literature, it was first studied and referred to as \textit{gross substitutes} in \cite{kelso_crawford_1982} (\textit{substitutability} is also a commonly used name in the matching literature). We follow the terminology of \cite{kelso_crawford_1982}.} If an alternative is chosen from a choice set at a capacity, then it is also chosen from any subset of the choice set that contains the alternative, at the same capacity. Formally, for each $(S,q)\in \mathcal{A} \times \{1,\ldots ,n\}$ and each pair $a,b\in S$ such that $a\neq b$, $$\mbox{if }a\in C(S,q),\mbox{ then }a\in C(S\backslash \{b\},q).$$

\noindent \textbf{Monotonicity:} If an alternative is chosen from a choice set at a capacity, then it is also chosen from the same choice set at any higher capacity. Formally, for each $(S,q)\in \mathcal{A} \times \{1,\ldots ,n-1\}$, $$C(S,q)\subseteq C(S,q+1).$$

We now introduce a new property called \textit{the irrelevance of accepted alternatives}. Consider a problem and the set of rejected alternatives for that problem. Suppose that the capacity increases. The property requires that which alternatives among the currently rejected alternatives will be chosen (if any) should not depend on the currently accepted alternatives. In other words, if the set of rejected alternatives are the same for two choice sets (note that the set of accepted alternatives may be different), then at any higher capacity, the set of initially rejected alternatives that become accepted should be the same for the two choice sets.  

\noindent \textbf{Irrelevance of accepted alternatives:} For each $S ,S' \in \mathcal{A}$ and each $q\in \{1,\ldots ,n-1\}$, $$\mbox{if }R(S,q)=R(S',q),\mbox{ then }C(S,q+1)\cap R(S,q) =C(S',q+1)\cap R(S',q).$$ 

We also introduce another property called \textit{capacity-wise weak axiom of revealed preference} which will be helpful in our analysis. Consider the following capacity-wise revealed preference relation. An alternative $a\in A$ is revealed to be preferred to an alternative $b\in A$ at a capacity $q>1$ if there is a problem with capacity $q-1$ for which $a$ and $b$ are both rejected and $a$ is chosen over $b$ when the capacity is $q$. That is, $a$ is \textbf{revealed to be preferred} to $b$ at $q$ if there exists $S\in \mathcal{A}$ such that $a,b\notin C(S,q-1)$, $a\in C(S,q)$, and $b\in R(S,q)$. \textit{Capacity-wise weak axiom of revealed preference} requires, for each capacity, the revealed preference relation to be asymmetric.

\noindent \textbf{Capacity-wise weak axiom of revealed preference (CWARP):} For each capacity $q>1$ and each pair $a,b\in A$, if $a$ is revealed to be preferred to $b$ at $q$, then $b$ is not revealed to be preferred to $a$ at $q$. 

\textit{CWARP} is a counterpart of the well-known \textit{weak axiom of revealed preference} (WARP) in the standard revealed preference framework \citep{samuelson}, where there is no capacity parameter.\@ In the standard framework, an alternative is said to be revealed preferred to another alternative if there is a choice set at which the former alternative is chosen over the latter.\@ \textit{WARP} requires the revealed preference relation to be asymmetric, which in a sense requires consistency of the choice behavior in responding to changes in the choice set. In our framework, the preference is revealed not only through the choice at a choice set, but also through a change in the capacity.\@ Therefore, what should be the counterpart of the ``revealed preference relation''  is not entirely clear.\@ We propose the following definition. An alternative is revealed to be preferred to another at a capacity if there is a choice set in which the former alternative is chosen over the latter at that capacity, although if the capacity were one less, none of the alternatives would have been chosen.\@ Put differently, if none of the two alternatives are chosen in a choice set at a given capacity, but one of them is chosen when capacity increases by one, this means the chosen alternative is revealed to be preferred to the unchosen one.\@   \textit{CWARP} requires the revealed preference relation to be asymmetric.\@ Hence, \textit{CWARP} requires consistency of the choice behavior in responding to changes in the choice set together with changes in the capacity. Additionally, one can interpret \textit{CWARP} as a  ``no complementarities'' condition, in the sense that \textit{CWARP}  requires the new alternative to be chosen due to the capacity increase be independent of the alternatives that have already been chosen.\@ For example, if two alternatives are complements, then the choice of each one of these alternatives may depend on whether the other one has already been chosen or not.\@ \textit{CWARP} rules out this type of choice behavior.

\begin{remark} The following is an alternative definition of \textit{CWARP}, which is formulated in line with the common formulations of WARP-type revealed preference relations in the literature.

\noindent\textit{An alternative definition of CWARP:} For each capacity $q>1$, each pair $S,T\in \mathcal{A}$ and each pair $a,b\in S\cap T$ such that $[C(S,q-1) \cup C(T,q-1)] \cap \{a,b\}= \emptyset$, 
$$\text{if } a \in C(S, q) \text{ and } b \in C(T,q)\setminus C(S,q), \text{ then } a \in C(T, q).$$
\end{remark}

\begin{lemma}
If a choice rule satisfies \textit{capacity-filling}, \textit{monotonicity}, and \textit{CWARP}, then it also satisfies \textit{the irrelevance of accepted alternatives}.
\label{lemma_rejection_monotonicity}
\end{lemma}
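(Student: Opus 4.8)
The plan is to prove the statement directly: fix $S,S'\in\mathcal{A}$ and $q\in\{1,\dots,n-1\}$ with $R(S,q)=R(S',q)=:R$ and show $C(S,q+1)\cap R=C(S',q+1)\cap R$. First I would dispose of the trivial case $R=\emptyset$, in which both intersections are empty. When $R\neq\emptyset$, capacity-filling gives $|C(S,q)|=\min\{|S|,q\}$, and since $R\neq\emptyset$ means $C(S,q)\subsetneq S$, we get $q<|S|$; hence $|C(S,q)|=q$ and $|C(S,q+1)|=\min\{|S|,q+1\}=q+1$, and likewise for $S'$.

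The key structural observation is that exactly one new alternative is accepted when the capacity rises from $q$ to $q+1$. By monotonicity $C(S,q)\subseteq C(S,q+1)$, and since $C(S,q+1)\subseteq S=C(S,q)\sqcup R$, the newly added alternatives lie in $R$; counting with capacity-filling yields $|C(S,q+1)\cap R|=|C(S,q+1)|-|C(S,q)|=1$. Thus there is a unique $a$ with $C(S,q+1)\cap R=\{a\}$ and, by the same argument, a unique $a'$ with $C(S',q+1)\cap R=\{a'\}$. The entire claim therefore reduces to showing $a=a'$.

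For this I would invoke CWARP at capacity $q+1$. Suppose toward a contradiction that $a\neq a'$. Since $a,a'\in R=R(S,q)$, both are rejected at $(S,q)$, i.e.\ $a,a'\notin C(S,q)$; moreover $a\in C(S,q+1)$, while $a'\in R\setminus\{a\}$ together with $C(S,q+1)\cap R=\{a\}$ gives $a'\in R(S,q+1)$. Taking the witnessing set to be $S$, this says $a$ is revealed to be preferred to $a'$ at $q+1$. Running the symmetric argument with the set $S'$ (using $a,a'\notin C(S',q)$, $a'\in C(S',q+1)$, and $a\in R(S',q+1)$) shows $a'$ is revealed to be preferred to $a$ at $q+1$. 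This contradicts the asymmetry demanded by CWARP, so $a=a'$ and the two intersections coincide.

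I expect the only subtle point to be the counting that pins $|C(S,q+1)\cap R|$ down to exactly one element, since this is what collapses the set-valued conclusion of the irrelevance of accepted alternatives into a single pairwise comparison — precisely the kind of comparison that the revealed-preference relation underlying CWARP is built to adjudicate. Once that reduction is established, the contradiction with CWARP is immediate and no further case analysis is needed.
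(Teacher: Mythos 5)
Your proof is correct and follows essentially the same route as the paper's: both use capacity-filling and monotonicity to match the newly accepted alternatives from the common rejected set, and then use $S$ and $S'$ as witnesses to produce two alternatives each revealed preferred to the other at capacity $q+1$, contradicting the asymmetry required by \textit{CWARP}. The only cosmetic differences are that you argue directly while the paper argues by contraposition, and you pin the newly accepted set down to a singleton where the paper only needs equal cardinalities of the rejected sets.
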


\begin{proof}
Let $C$ be a choice rule. Suppose that $C$ satisfies \textit{capacity-filling} and \textit{monotonicity}, but violates \textit{the irrelevance of accepted alternatives}. By violation of \textit{the irrelevance of accepted alternatives}, there are $S ,S' \in \mathcal{A}$ and $q\in \{1,\ldots ,n-1\}$ such that $R(S,q)=R(S',q)$, but $ C(S,q+1)\cap R(S,q)\neq C(S',q+1)\cap R(S',q) .$ By \textit{monotonicity}, $R(S,q+1)\subseteq R(S,q)$ and $R(S',q+1)\subseteq R(S',q)$. By \textit{capacity-filling}, $|R(S,q+1)|=|R(S',q+1)|.$ Then, there exist $a,b\in R(S,q)=R(S',q)$ such that $a\in C(S,q+1)$, $b\notin C(S,q+1)$, $b\in C(S',q+1)$, and $a\notin C(S',q+1)$. But then, $a$ is revealed preferred to $b$ and vice versa, implying that $C$ violates \textit{CWARP}.
\end{proof}

In Appendix \ref{necessity_lemma_1}, we show that each of the three properties \textit{capacity-filling}, \textit{monotonicity}, and \textit{CWARP} is necessary for the implication in Lemma \ref{lemma_rejection_monotonicity}, that is, we provide examples of choice rules which violate exactly one of the three properties and also violate \textit{the irrelevance of accepted alternatives}.

The following example shows that there exists a choice rule that satisfies \textit{capacity-filling}, \textit{monotonicity}, and \textit{the irrelevance of accepted alternatives}, but violates \textit{CWARP}.

\begin{example}
Let $A=\{a,b,c,d,e\}$. Let $\succ$ and $\succ'$ be defined as $a\succ b \succ c \succ d \succ e$ and $a\succ' c \succ' b \succ' d \succ' e$. Let the choice rule $C$ be defined as follows. For each problem $(S, q)$, if $d\in S$, then $C(S,q)$ chooses the highest $\succ$-priority alternatives from $S$ until $q$ alternatives are chosen or no alternative is left;\footnote{That is, $C(S,q)$ coincides with the choice rule that is ``responsive'' for $\succ$. We discuss responsive choice rules in Section~\ref{responsive}.} if $d\notin S$, then $C(S,q)$ chooses the highest $\succ'$-priority alternatives from $S$ until $q$ alternatives are chosen or no alternative is left. Note that $C$ clearly satisfies \textit{capacity-filling} and \textit{monotonicity}. To see that $C$ also satisfies \textit{the irrelevance of accepted alternatives}, let $S ,S' \in \mathcal{A}$ and $q\in \{1,\ldots ,n-1\}$ be such that $R(S,q)=R(S',q)$. If $d\in S\cap S'$ or $d\in A\setminus (S\cup S')$, then $ C(S,q+1)\cap R(S,q)= C(S',q+1)\cap R(S',q)$. So suppose, without loss of generality, that $d\in S\setminus S'$. Since $R(S,q)=R(S',q)$, we have $d\in C(S,q)$. But then, either $R(S,q)=\emptyset$ or $R(S,q)=\{e\}$. In either case, we have $ C(S,q+1)\cap R(S,q)= C(S',q+1)\cap R(S',q)$. To see that $C$ violates \textit{CWARP}, note that $C(\{a,b,c,d\},1)=\{a\}$ and $C(\{a,b,c,d\},2)=\{a,b\}$, implying that $b$ is revealed preferred to $c$ at $q=2$. Also, $C(\{a,b,c,e\},1)=\{a\}$ and $C(\{a,b,c,e\},2)=\{a,c\}$, implying that $c$ is revealed preferred to $b$ at $q=2$.  
\end{example}

\begin{theorem}
A choice rule is (capacity-constrained) lexicographic if and only if it satisfies \textit{capacity-filling}, \textit{gross substitutes}, \textit{monotonicity}, and \textit{the irrelevance of accepted alternatives}.\footnote{Independence of the characterizing properties is shown in Appendix~\ref{appendix:independence}.} 
\label{thm1}
\end{theorem}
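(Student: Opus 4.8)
The plan is to prove both implications, dispatching necessity quickly and concentrating effort on sufficiency. For necessity, suppose $C$ is lexicographic for $\pi=(\succ_1,\dots,\succ_n)$. The key preliminary observation is that the greedy procedure generates a single sequence $x_1,x_2,\dots$ exhausting $S$ (where $x_k$ is the $\succ_k$-maximum of the elements remaining after $x_1,\dots,x_{k-1}$), and that $C(S,q)$ is exactly the length-$\min\{|S|,q\}$ prefix of this sequence. \emph{Capacity-filling} and \emph{monotonicity} are then immediate from the nested prefix structure, and \emph{the irrelevance of accepted alternatives} reduces to a one-line remark: the pool remaining before step $q+1$ is precisely $R(S,q)$, so the element added in passing from capacity $q$ to $q+1$ is $\max_{\succ_{q+1}}R(S,q)$, which depends only on $R(S,q)$. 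The only property that needs real work is \emph{gross substitutes}; I would prove it by comparing the sequences generated by $S$ and by $S\setminus\{b\}$ and showing, by induction on the number of steps $i$, that the set chosen from $S\setminus\{b\}$ after $i$ steps contains the set chosen from $S$ after $i$ steps with $b$ deleted. The inductive step is a short exchange argument using the definition of the per-step maxima.

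For sufficiency, assume $C$ satisfies the four axioms and build a profile for which $C$ is lexicographic. First, \emph{monotonicity} and \emph{capacity-filling} attach to each $S$ a well-defined choice sequence $s_1(S),s_2(S),\dots$, where $s_k(S)$ is the unique element of $C(S,k)\setminus C(S,k-1)$ (with $C(S,0)=\emptyset$). The target is to produce orderings $\succ_1,\dots,\succ_n$ with $s_k(S)=\max_{\succ_k}R(S,k-1)$ for every $k$ and every $S$, which is precisely the lexicographic property. For each level $k$ I define a next-pick function $g_k$ by $g_k\big(R(S,k-1)\big)=s_k(S)$; \emph{the irrelevance of accepted alternatives} is exactly what guarantees $g_k$ is well defined, i.e. depends only on the rejected pool and not on $S$.

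The heart of the argument is to rationalize each $g_k$ by a linear order, and the crucial move is identifying the correct domain. Let $D_{k-1}=R(A,k-1)$ be the set of elements rejected from the full set $A$ at capacity $k-1$. Using \emph{gross substitutes} I would show (i) every element rejected from any $S$ at capacity $k-1$ lies in $D_{k-1}$, so every rejected pool is a subset of $D_{k-1}$, and (ii) conversely, deleting already-rejected elements from a choice set leaves the chosen set unchanged, so every nonempty subset of $D_{k-1}$ genuinely arises as a rejected pool. Hence $g_k$ is defined on all nonempty subsets of $D_{k-1}$. Next I verify that $g_k$ is contraction-consistent: if $R'\subseteq R$ and $g_k(R)\in R'$, then $g_k(R')=g_k(R)$; this follows by deleting the elements of $R\setminus R'$ (which are rejected at both capacities $k-1$ and $k$) and invoking the gross-substitutes invariance at both levels simultaneously. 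A single-valued, contraction-consistent selection on all nonempty subsets of a finite set is rationalized by a linear order $\succ_k^{*}$ on $D_{k-1}$ (a standard WARP-style argument through two- and three-element sets), and I extend $\succ_k^{*}$ arbitrarily to an ordering $\succ_k$ on $A$. Because every rejected pool sits inside $D_{k-1}$, this extension never interferes, so $g_k(R)=\max_{\succ_k}R$ for every pool $R$. A short induction on $k$ then shows the choice sequence $s_k(S)$ coincides with the greedy sequence for $(\succ_1,\dots,\succ_n)$, completing the proof.

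The main obstacle is exactly this domain issue in the sufficiency direction: high-priority elements are never rejected, so the next-pick function is simply undefined on any pool containing them, and one cannot hope to rank such elements against others inside $\succ_k$ from the data of $C$. Restricting attention to $D_{k-1}$ is what resolves this, and its payoff is twofold — it is precisely the region where \emph{gross substitutes} delivers both the richness (every subset is an honest rejected pool) required to run the rationalization and the invariance required for contraction consistency, while the fact that all rejected pools lie inside $D_{k-1}$ is what renders the arbitrary extension of $\succ_k^{*}$ to $A$ harmless.
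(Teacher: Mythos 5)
Your proof is correct, and although it rests on the same basic observation as the paper's--by \emph{monotonicity} and \emph{capacity-filling} the passage from capacity $k-1$ to $k$ adds exactly one element of the rejected pool, and \emph{the irrelevance of accepted alternatives} makes that element a function of the pool alone--your sufficiency argument is organized around a genuinely different decomposition. The paper builds the priority profile explicitly: it reads off $\succ_k$ by iteratively deleting from $A$ the successive picks of $C(\cdot,k)$ (filling the last $k-1$ slots by fiat with the earlier top picks $a_{11},\dots,a_{(k-1)1}$), and then verifies lexicographicness by an induction in which the first $i-1$ chosen elements of an arbitrary $S$ are swapped for $a_{11},\dots,a_{(i-1)1}$, so that the two problems have equal rejected sets at capacity $i-1$ and IAA together with \emph{monotonicity} forces $R(S,q)=R(S',q)$. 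You instead package IAA once and for all as the well-definedness of a next-pick function $g_k$ on rejected pools, and your key structural addition is the domain analysis: both inclusions showing that the pools arising at capacity $k-1$ are exactly the nonempty subsets of $D_{k-1}=A\setminus C(A,k-1)$ follow from \emph{gross substitutes} plus \emph{capacity-filling}, after which contraction consistency of $g_k$ (delete elements of $R\setminus R'$, each rejected at both capacities $k-1$ and $k$, and match cardinalities at both levels) and the standard rationalization of a single-valued contraction-consistent choice function by a linear order finish the construction. Your route buys modularity and transparency: it isolates where each axiom acts, and the restriction to $D_{k-1}$ explains conceptually why the bottom $k-1$ positions of $\succ_k$ are arbitrary, which in the paper appears only as an unexplained convention; the paper's route buys a fully explicit, citation-free construction of the profile at the cost of the more delicate substitution-induction. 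In the necessity direction you also deviate harmlessly: you prove \emph{gross substitutes} by an exchange induction (the paper cites Chambers--Yenmez) and derive IAA directly from the prefix structure of the greedy sequence, where the paper detours through \emph{CWARP} and Lemma~\ref{lemma_rejection_monotonicity}; the only bookkeeping your sketch suppresses is the cardinality check in the contraction-consistency step (when $R\setminus R'\neq\emptyset$ one has $|S|\geq k+1$, so the chosen sets at capacities $k-1$ and $k$ are preserved under deletion), which is routine and not a gap.
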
 

\begin{proof}
Let $C$ be lexicographic for $(\succ_1, \ldots ,\succ_n)\in \Pi$. Clearly, $C$ satisfies \textit{capacity-filling} and \textit{monotonicity}, and it is already known from the literature that $C$ satisfies \textit{gross substitutes} \citep{chambers_yenmez_2018b}. To see that it satisfies \textit{CWARP}, let $a,b\in A$ and $q\in \{2,\ldots ,n\}$ be such that $a$ is revealed preferred to $b$ at $q$. Then, there is $S\in \mathcal{A}$ such that $a,b\in R(S,q-1)$, $a\in C(S,q)$, and $b\in R(S,q)$. But then, $a\succ_q b$. If also $b$ is revealed preferred to $a$ at $q$, then by similar arguments we have $b\succ_q a$, contradicting that $\succ_q$ is antisymmetric. Thus, the revealed preference relation is asymmetric and $C$ satisfies \textit{CWARP}. By Lemma \ref{lemma_rejection_monotonicity}, $C$ also satisfies \textit{the irrelevance of accepted alternatives}.

Let $C$ be a choice rule satisfying \textit{capacity-filling}, \textit{gross substitutes}, \textit{monotonicity}, and \textit{the irrelevance of accepted alternatives}. We first construct a priority profile $(\succ_1, \ldots ,\succ_n)\in \Pi$ and then show that $C$ is lexicographic for that priority profile. For each $i,j\in \{1,\ldots ,n\}$, let $a_{ij}$ denote the $j$'th ranked alternative in $\succ_i$ (for instance, $a_{i1}$ is the highest $\succ_i$-priority alternative).

To construct $\succ_1$, first set $\{a_{11}\}=C(A,1)$. For each $j\in \{2,\ldots ,n\}$, set $\{a_{1j}\}=C(A\setminus \{a_{11},\ldots ,a_{1(j-1)}\},1)$.
To construct $\succ_2$, consider $C(A,2)$. By \textit{capacity-filling}, $|C(A,2)|=2$. Since $a_{11}\in C(A,1)$, by \textit{monotonicity}, $a_{11}\in C(A,2)$. Set $\{a_{21}\}=C(A,2)\setminus \{a_{11}\}$. For each $j\in \{2,\ldots ,n-1\}$, set $\{a_{2j}\}=C(A\setminus \{a_{21},a_{22},\ldots ,a_{2(j-1)}\},2)\setminus \{a_{11}\}$. Set $a_{2n}=a_{11}$. 

The rest of the priority profile is constructed recursively as follows. For each $i\in \{3,\ldots ,n\}$, first set $\{a_{i1}\}=C(A,i)\setminus \{a_{11},a_{21},\ldots ,a_{(i-1)1}\}$ (Note that by \textit{monotonicity}, $\{a_{11},a_{21},\ldots ,a_{(i-1)1}\}\subseteq C(A,i)$ and by \textit{capacity-filling}, $|C(A,i)|=i$). For each $j\in \{2,\ldots ,n-i+1\}$, set $\{a_{ij}\}=C(A\setminus \{a_{i1},a_{i2},\ldots ,a_{i(j-1)}\},i)\setminus \{a_{11},a_{21},\ldots ,a_{(i-1)1}\}$. Note that there are $i-1$ rankings yet to be set in $\succ_i$, which are $\{a_{i(n-i+2)},\ldots ,a_{in}\}$. For each $j\in \{n-i+2,\ldots ,n\}$, set $a_{ij}=a_{(j+i-n-1)1}$ (which assigns the alternatives $a_{11}, \ldots , a_{(i-1)1}$ to the rankings $a_{i(n-i+2)},\ldots ,a_{in}$, respectively). 

Now, let $(S,q)\in \mathcal{A} \times \{1,\ldots ,n\}$. Let $b_1$ denote the highest $\succ_1$-priority alternative in $S$, $b_2$ denote the highest $\succ_2$-priority alternative among the remaining alternatives, and so on up to $b_{\min\{|S|,q\}}$. We show that $C(S,q)=\{b_1,\ldots ,b_{\min\{|S|,q\}}\}$. If $\min\{|S|,q\}=|S|$, then by \textit{capacity-filling}, $C(S,q)=\{b_1,\ldots ,b_{|S|}\}$. Suppose that $|S|>q$.

The rest of the proof is by induction: we first show that $b_1 \in C(S,q)$; then, for an arbitrary $i\in \{2,\ldots ,q\}$, assuming that $b_1,\ldots ,b_{i-1}\in C(S,q)$, we show that $b_i\in C(S,q)$. Let $b_1=a_{1j}$ for some $j\in \{1,\ldots ,n\}$. By the construction of $\succ_1$, $b_1\in C(A\setminus \{a_{11},\ldots ,a_{1(j-1)}\},1)$. Then, by \textit{gross substitutes} and \textit{monotonicity}, $b_1 \in C(S,q)$.

Let $i\in \{2,\ldots ,q\}$. Assuming that $b_1,\ldots ,b_{i-1}\in C(S,q)$, we show that $b_i \in C(S,q)$. Let $S'$ be the choice set obtained from $S$ by replacing $b_1$ with $a_{11}$ (note that nothing changes if $b_1=a_{11}$), replacing $b_2$ with $a_{21}$, $\ldots$, and replacing $b_{i-1}$ with $a_{(i-1)1}$. That is, $S'=(S\setminus \{b_1,\ldots ,b_{i-1}\})\cup \{a_{11},\ldots ,a_{(i-1)1}\}$. Let $q'=i-1$. Note that $\{b_1,\ldots ,b_{i-1}\}=C(S,q')$, because otherwise, by \textit{capacity-filling}, there is $a\in S$ such that $a\in C(S,q')$ and $a\notin C(S,q)$, which is a violation of \textit{monotonicity}. Also, by the construction of the priority profile and by \textit{gross substitutes}, $\{a_{11},\ldots ,a_{(i-1)1}\}=C(S',q')$. Note that $R(S,q')=R(S',q')$. By \textit{monotonicity} and \textit{the irrelevance of accepted alternatives}, we have $R(S,q)=R(S',q)$. Since $b_i\in C(S',q)$ by the construction of the priority profile and by \textit{gross substitutes}, we also have $b_i\in C(S,q)$.  
\end{proof}

\begin{corollary}
A choice rule is (capacity-constrained) lexicographic if and only if it satisfies \textit{capacity-filling}, \textit{gross substitutes}, \textit{monotonicity}, and \textit{CWARP}. 
\label{corollary_rejection_monotonicity}
\end{corollary}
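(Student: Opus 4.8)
The plan is to deduce this corollary from Theorem \ref{thm1} and Lemma \ref{lemma_rejection_monotonicity}, so that no fresh inductive construction is needed. Both characterizations share the three properties \textit{capacity-filling}, \textit{gross substitutes}, and \textit{monotonicity}; the only difference is that Theorem \ref{thm1} uses \textit{the irrelevance of accepted alternatives} where the corollary uses \textit{CWARP}. Hence the whole task reduces to relating these two remaining properties under the standing assumptions.

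For the ``if'' direction, I would suppose that $C$ satisfies \textit{capacity-filling}, \textit{gross substitutes}, \textit{monotonicity}, and \textit{CWARP}. Since $C$ in particular satisfies \textit{capacity-filling}, \textit{monotonicity}, and \textit{CWARP}, Lemma \ref{lemma_rejection_monotonicity} immediately yields that $C$ also satisfies \textit{the irrelevance of accepted alternatives}. Thus $C$ satisfies all four hypotheses of Theorem \ref{thm1}, and by the ``if'' direction of that theorem, $C$ is \textit{lexicographic}.

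For the ``only if'' direction, I would observe that the needed work has already been done inside the proof of Theorem \ref{thm1}: there it is shown that any \textit{lexicographic} choice rule satisfies \textit{capacity-filling}, \textit{monotonicity}, \textit{gross substitutes}, and---crucially for us---\textit{CWARP} directly, by reading off $a \succ_q b$ from the revealed-preference witness and invoking antisymmetry of $\succ_q$. So a \textit{lexicographic} choice rule satisfies precisely the four properties listed in the corollary, and both directions are established.

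I do not anticipate a genuine obstacle; the one point requiring care is that the ``only if'' direction cannot be imported from Theorem \ref{thm1} verbatim, because that theorem's statement ends in \textit{the irrelevance of accepted alternatives} rather than \textit{CWARP}. One must instead cite the intermediate step of the Theorem's proof in which \textit{CWARP} itself is verified for lexicographic rules, rather than the final packaged statement. Once this is noted, assembling the two directions completes the proof.
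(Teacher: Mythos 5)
Your proof is correct and follows the same route as the paper: deducing the ``if'' direction from Lemma~\ref{lemma_rejection_monotonicity} combined with Theorem~\ref{thm1}, and obtaining \textit{CWARP} for lexicographic rules by citing the intermediate step inside the proof of Theorem~\ref{thm1} rather than its packaged statement. The point of care you flag is precisely the one the paper's own proof addresses, so nothing is missing.
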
 

\begin{proof}
A lexicographic choice rule satisfies \textit{capacity-filling}, \textit{gross substitutes}, and \textit{monotonicity} by Theorem~\ref{thm1}. Also note that in the proof Theorem~\ref{thm1}, we showed that a lexicographic choice rule satisfies \textit{CWARP} as well. To see the other direction, note that by Lemma~\ref{lemma_rejection_monotonicity}, \textit{capacity-filling}, \textit{monotonicity}, and \textit{CWARP} imply \textit{the irrelevance of accepted alternatives} and the rest follows by Theorem~\ref{thm1}.
\end{proof}

There is never a unique priority profile for which a given choice rule is lexicographic. However, if $C$ is lexicographic for two different priority profiles $(\succ_1, \ldots ,\succ_n)$ and $(\succ'_1, \ldots ,\succ'_n)$, then for each pair of alternatives $a,b\in A$, 
if $a\mathrel{\succ_t}b$ and $b\mathrel{\succ'_t}a$
for some $t\in \{1,\ldots ,n\}$, then $a$ or $b$ must be chosen from any choice set (particularly from $A$) at any capacity $q<t$. That is, $a$ or $b$  is chosen irrespective of their relative ranking at the $t$'th priority ordering.  

To state this observation formally, for each priority ordering $\succ_i$ on $A$ and for each choice set $S\in \mathcal{A}$, let $\succ_i|_S$ stand for the restriction of $\succ_i$ to $S$. 
Let $A_1=A$, and  for each $t\in \{2,\ldots ,n\}$, let $A_t=A\setminus C(A,t-1)$. For each choice set $S\in \mathcal{A}$ and each priority ordering $\succ_i$, let $max(S, \succ_i) $ be the top-ranked alternative in $S$ according to $\succ_i$.

\begin{proposition}
If a choice rule $C$ is (capacity-constrained) lexicographic for a priority profile $(\succ_1, \ldots ,\succ_n)$, then  $C$ is lexicographic for another priority profile $(\succ'_1, \ldots ,\succ'_n)$ if and only if $\succ_1\: = \: \succ'_1$ and for each $t\in \{1,\ldots ,n\}$, $\succ_t|_{A_t}\: =\:\succ'_t|_{A_t}$.
\end{proposition}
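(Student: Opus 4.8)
The plan is to work throughout with the ``master order'' of selection that $C$ induces on the full set $A$. Define $c_1,\dots,c_n$ by letting $c_s$ be the unique alternative in $C(A,s)\setminus C(A,s-1)$ (with $C(A,0)=\emptyset$); by \emph{capacity-filling} and \emph{monotonicity} this is well defined, $C(A,s)=\{c_1,\dots,c_s\}$, and $A_t=A\setminus C(A,t-1)=\{c_t,\dots,c_n\}$, so the $A_t$ are nested: $A_1\supseteq A_2\supseteq\cdots$. The one fact I would record at the outset is that whenever $C$ is lexicographic for a profile $(\succ_1,\dots,\succ_n)$, running the procedure on $(A,s)$ forces $c_s=\max(A_s,\succ_s)$. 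Since the $c_s$ and the $A_s$ are defined from $C$ alone, this identity holds simultaneously for $\succ_s$ and for any $\succ'_s$ for which $C$ is also lexicographic.

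For the ``only if'' direction, suppose $C$ is lexicographic for both profiles. Taking $S=\{a,b\}$ and $q=1$ gives $\{\max(\{a,b\},\succ_1)\}=C(\{a,b\},1)=\{\max(\{a,b\},\succ'_1)\}$, hence $\succ_1=\succ'_1$, which covers the first condition and the $t=1$ instance of the second. For $t\ge 2$ and distinct $a,b\in A_t$, I would probe the $t$-th priority directly by choosing from $S=\{c_1,\dots,c_{t-1},a,b\}$ at capacity $t$. Running either procedure on $(S,t)$, the pool at each step $s\le t-1$ is $\{c_s,\dots,c_{t-1},a,b\}\subseteq A_s$ and contains $c_s=\max(A_s,\succ_s)=\max(A_s,\succ'_s)$, so both procedures select $c_1,\dots,c_{t-1}$ in the first $t-1$ steps; at step $t$ the common pool is $\{a,b\}$, so the last chosen alternative is $\max(\{a,b\},\succ_t)$ when computed via $\pi$ and $\max(\{a,b\},\succ'_t)$ when computed via $\pi'$. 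As both equal the unique element of $C(S,t)\cap\{a,b\}$, we conclude $a\succ_t b\iff a\succ'_t b$, i.e.\ $\succ_t|_{A_t}=\succ'_t|_{A_t}$.

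For the ``if'' direction, assume the two displayed conditions and show that the $\pi'$-procedure reproduces $C$ on every $(S,q)$. Let $b_1,\dots,b_m$, with $m=\min\{|S|,q\}$, be the choices $C$ makes via $\pi$; I would prove that $b_s$ is also the $s$-th $\pi'$-choice by induction on $s$. The base case is $\succ_1=\succ'_1$. For the inductive step the pool seen by both procedures is the common set $P=S\setminus\{b_1,\dots,b_{s-1}\}$, the $\pi$-choice is $\max(P,\succ_s)=b_s$ and the $\pi'$-choice is $\max(P,\succ'_s)$; it therefore suffices to show $P\subseteq A_s$, because then $\succ_s|_P=\succ'_s|_P$ forces the two maxima to coincide.

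The containment $P\subseteq A_s$ is the step I expect to be the real work, and it is where \emph{gross substitutes} enters. The claim reduces to showing that every $c_j$ with $j\le s-1$ that happens to lie in $S$ is already among $b_1,\dots,b_{s-1}$. Since $c_j\in C(A,j)$ and $c_j\in S\subseteq A$, \emph{gross substitutes} yields $c_j\in C(S,j)$, and because $|S|\ge m\ge s>j$ we have $C(S,j)=\{b_1,\dots,b_j\}\subseteq\{b_1,\dots,b_{s-1}\}$. Hence $\{c_1,\dots,c_{s-1}\}\cap S\subseteq\{b_1,\dots,b_{s-1}\}$, which is exactly $P\subseteq A_s$. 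This closes the induction, so $C(S,q)=\{b_1,\dots,b_m\}$ coincides with the $\pi'$-choice and $C$ is lexicographic for $\pi'$.
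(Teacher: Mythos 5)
Your proof is correct, and it is worth separating the two directions when comparing it to the paper's. The ``if'' direction is essentially the paper's argument: an induction on the steps of the procedure whose only substantive point is that the step-$s$ pool lies inside $A_s$; the paper simply asserts the containment $S\setminus C(S,t-1)\subseteq A_t$, while you derive it from \emph{gross substitutes} ($c_j\in C(A,j)$ and $c_j\in S$ imply $c_j\in C(S,j)=\{b_1,\dots,b_j\}$), which supplies exactly the justification the paper leaves tacit. The ``only if'' direction is where you genuinely depart. The paper defines, for each $t$, an auxiliary single-valued choice function $c_t(S)=C(S,t)\setminus C(S,t-1)$ on the collection $\mathcal{A}_t$ of subsets of $A_t$ with at least $t$ elements, observes that it inherits \emph{gross substitutes}, and invokes the existence of a unique rationalizing ordering $\succ^{*}_t$; you instead pin down $\succ_t|_{A_t}$ directly by running both procedures on the explicit probe sets $S=\{c_1,\dots,c_{t-1},a,b\}$ at capacity $t$, where the first $t-1$ picks are forced to be $c_1,\dots,c_{t-1}$ (since $c_s=\max(A_s,\succ_s)=\max(A_s,\succ'_s)$ and the pool stays inside $A_s$) and the $t$-th pick then reveals the relative ranking of $a$ and $b$ under both profiles simultaneously. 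Your version buys two things: it is more elementary, avoiding any appeal to a rationalizability lemma for contraction-consistent choice functions, and it is more robust on a domain point the paper glosses over --- $\mathcal{A}_t$ is empty whenever $n-t+1<t$ (e.g.\ $n=5$, $t=4$ gives $|A_4|=2$, so there are no subsets of $A_4$ of size $4$), even though $\succ_t|_{A_t}$ remains a nontrivial constraint for every $t\le n-1$; your probe sets, of size $t+1\le n$ and drawn from all of $A$ rather than from $A_t$ alone, cover exactly the pairs the statement requires. In effect, your probe sets are the concrete witnesses that the paper's uniqueness claim implicitly presupposes.
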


\begin{proof}

\textit{(If part)} Let choice rule $C$ be  lexicographic for a priority profile $(\succ_1, \ldots ,\succ_n)$. Suppose $(\succ'_1, \ldots ,\succ'_n)$ is such that $\succ_1\: = \: \succ'_1$ and for each $t\in \{1,\ldots ,n\}$, $\succ_t|_{A_t}\: =\:\succ'_t|_{A_t}$. Now, for each $S\in \mathcal{A}$ and $t\in \{1,\ldots ,n\}$, if $t=1$, then since $\succ_1=\succ'_1$, the conclusion is immediate. Then, by proceeding inductively, for each $1<t\leq|S|$,  since   $C$ is lexicographic for  $(\succ_1, \ldots ,\succ_n)$,  $max(S\setminus C(S,t-1), \succ_t) = C(S,t)\setminus C(S,t-1)$. Since $S\setminus C(S,t-1)\subset A_t$ and $\succ_t|_{A_t}\: =\: \succ'_t|_{A_t}$, we get $max(S\setminus C(S,t-1), \succ'_t) = C(S,t)\setminus C(S,t-1)$. It follows that $C$ is lexicographic for  $(\succ'_1, \ldots ,\succ'_n)$.   

(\textit{Only if part}) For each $t\in \{1,\ldots ,n\}$, let $\mathcal{A}_t$ stand for the collection of all nonempty subsets of $A_t$ with at least $t$ elements.\@ Then, define the choice function $c_t:\mathcal{A}_t\rightarrow A_t$ such that for each choice set $S\in \mathcal{A}_t$,  $c_t(S)=C(S,t)\setminus C(S,t-1)$. Since $C$ satisfies  \textit{gross substitutes}, $c_t$ also satisfies \textit{gross substitutes}.\@ It follows that there is a unique priority ordering $\succ^{*}_t$ such that $c_t(S)=\max\{S\setminus C(S,t-1), \succ^{*}_t\}$.\@ Therefore, if $C$ is lexicographic for some  $(\succ_1, \ldots ,\succ_n)$, then for each $t\in \{1,\ldots ,n\}$, $\succ_t|_{A_t}\: =\:\succ^{*}_t$.
\end{proof}

\subsection{Lexicographic Choice Under Feasibility Constraints}
\label{feasibility}

In some applications, the choice rule of an institution is subject to a feasibility constraint. For example, a firm may encounter a choice set which includes signing the same worker under different terms, such as different salaries as modeled in \cite{kelso_crawford_1982}, and it may not be possible to choose the same worker under several terms even when there is enough capacity (for instance, it is not possible to choose the same worker under different salaries). The matching with contracts model due to \cite{hm2005} introduced a general framework that incorporates such feasibility constraints into the matching problem, which led to several new applications of matching theory such as cadet-branch matching by \cite{sonmez13} and \cite{sonmez13b}, and matching with slot-specific priorities by \cite{ks2012}. In this section, we will show that our baseline model and our baseline properties can be extended to a setup with feasibility constraints, highlighting the distinguishing properties of lexicographic choice rules in a more general setup.
As in the baseline model, let $A$ be a nonempty finite set of $n$ alternatives and let $\mathcal{A}$ denote the set of all \textit{nonempty} subsets of $A$. In addition, let $\mathcal{F}\subseteq \mathcal{A}$ be a nonempty set of \textit{feasible} sets. We assume that $\mathcal{F}$  is \textit{downward closed} in the sense that for each $S\in \mathcal{F}$ and each $S'\subseteq S$, $S'\in \mathcal{F}$.\footnote{In a matching with contracts model with distributional constraints, \cite{goto17} introduce the concept of a ``hereditary" distributional constraint, which implies that $\mathcal{F}$ is downward closed.} We also assume that each singleton is feasible, i.e.\ for each $a\in A$, $\{a\}\in \mathcal{F}$.\footnote{Note that, given downward closedness, this is equivalent to requiring that each alternative belongs to at least one feasible set.}

A (feasibility-constrained) \textbf{choice rule} $C: \mathcal{A} \times \{1,\ldots ,n\} \rightarrow \mathcal{F}$ 
associates with each problem $(S,q)\in \mathcal{A} \times \{1,\ldots ,n\}$, a \textit{nonempty} set of choices $C(S,q)\subseteq S$ which is feasible, i.e.\ $C(S,q)\in \mathcal{F}$, and respects the capacity constraint, i.e.\ $|C(S,q)|\leq q$. Given a choice rule $C$, we denote the set of rejected alternatives at a problem $(S,q)$ by $R(S,q)=S\setminus C(S,q)$.\medskip

Our new framework encompasses the matching with contracts framework in the following way. Suppose that each alternative is a contract consisting of a pair: an agent and a contractual term. Suppose that a choice set is feasible if it includes, for each agent, at most one contract including that agent. It is easy to see that $\mathcal{F}$  is downward closed and it includes the singletons. 

A feasibility-constrained choice rule $C$ is \textbf{(capacity-constrained) lexicographic} if there exists a priority profile $(\succ_1, \ldots ,\succ_n)\in \Pi$ such that for each $(S,q)\in \mathcal{A} \times \{1,\ldots ,n\}$, $C(S,q)$ is obtained by choosing the highest $\succ_1$-priority alternative in $S$, then choosing the highest $\succ_2$-priority alternative among the remaining alternatives that induces a feasible set together with the previously chosen alternative,\footnote{Formally, let $a$ be the highest $\succ_1$-priority alternative in $S$. Let $S'=\{b\in S\setminus \{a\}: \{a,b\}\in \mathcal{F}\}$. Then, the highest $\succ_2$-priority alternative among the remaining alternatives that induces a feasible set together with the previously chosen alternative is $max\{S',\succ_2\}$.} and so on as long as there is a remaining alternative until finally choosing the highest $\succ_q$-priority alternative among the remaining alternatives that induces a feasible set together with the previously chosen alternatives. \medskip

\noindent \textbf{$\mathcal{F}$-capacity-filling:} An alternative is rejected from a choice set at a capacity only if the capacity is full or it is infeasible to choose the alternative. Formally, for each $(S,q)\in \mathcal{A} \times \{1,\ldots ,n\}$ and $a\in S$, if $a\notin C(S,q)$, then either $|C(S,q)|=q$ or $C(S,q)\cup \{a\}\notin \mathcal{F}$.

Let us adopt the convention that for each $S\in \mathcal{A}$, $C(S,0)=\emptyset$. Now, for each capacity $q\in \{1,\ldots, n\}$, $a$ is \textbf{revealed to be $\mathcal{F}$-preferred} to $b$ at $q$, denoted by $a\mathrel{R_q^\mathcal{F}}b$, if there exists $S\in \mathcal{A}$ such that $a,b\notin C(S,q-1)$, and $a\in C(S,q)$ but $b\notin C(S,q)$, although $C(S,q-1)\cup \{b\}\in \mathcal{F}$. We introduce the following property which requires, for each capacity, the revealed preference relation be acyclic.

\noindent \textbf{Capacity-wise strong axiom of revealed preference (CSARP):} For each capacity $q\in \{1,\ldots, n\}$, $\mathrel{R_q^\mathcal{F}}$ is acyclic.

\begin{proposition}
        A feasibility-constrained choice rule is (capacity-constrained) lexicographic if and only if it satisfies \textit{$\mathcal{F}$-capacity-filling},  \textit{monotonicity}, and the \textit{capacity-wise strong axiom of revealed preference}.
\end{proposition}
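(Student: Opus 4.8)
The plan is to prove the two directions separately, with the bulk of the work in the sufficiency direction, which adapts the construction behind Theorem~\ref{thm1} to the feasibility setting. Everything hinges on a structural lemma that I would establish first: under \textit{monotonicity}, \textit{$\mathcal{F}$-capacity-filling}, and downward-closedness of $\mathcal{F}$, every problem $(S,q)$ with $q\geq 2$ satisfies either $C(S,q)=C(S,q-1)$ or $|C(S,q)|=|C(S,q-1)|+1$, and $|C(S,1)|=1$. The argument is short: \textit{monotonicity} gives $C(S,q-1)\subseteq C(S,q)$; if $|C(S,q-1)|<q-1$, then \textit{$\mathcal{F}$-capacity-filling} forces $C(S,q-1)$ to be a maximal feasible subset of $S$, and since any $a\in C(S,q)\setminus C(S,q-1)$ would make $C(S,q-1)\cup\{a\}$ a feasible set by downward-closedness, we must have $C(S,q)=C(S,q-1)$; if $|C(S,q-1)|=q-1$, the capacity bound caps growth at one. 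This ``grows by at most one'' lemma is what lets me speak of the unique ``$q$-th added element'' $y=C(S,q)\setminus C(S,q-1)$ when growth occurs.

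For necessity, assume $C$ is lexicographic for $(\succ_1,\ldots,\succ_n)$. The procedure generates a single capacity-independent sequence of choices which capacity merely truncates, so \textit{monotonicity} is immediate and \textit{$\mathcal{F}$-capacity-filling} holds because the procedure halts only at the capacity or when no remaining alternative feasibly extends the chosen set. For \textit{CSARP} I would show $\mathrel{R_q^\mathcal{F}}\,\subseteq\,\succ_q$: if $a\mathrel{R_q^\mathcal{F}}b$ with witness $S$, then $a$ is the $q$-th added element while $b$ lies in the set of remaining alternatives that feasibly extend $C(S,q-1)$ (as $C(S,q-1)\cup\{b\}\in\mathcal{F}$), so the maximality defining the $q$-th lexicographic step forces $a\succ_q b$; since $\succ_q$ is a strict order, $\mathrel{R_q^\mathcal{F}}$ is acyclic.

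For sufficiency, \textit{CSARP} makes each $\mathrel{R_q^\mathcal{F}}$ acyclic, so its transitive closure is a strict partial order and extends to a priority ordering $\succ_q$; set $\pi=(\succ_1,\ldots,\succ_n)$. I claim $C$ is lexicographic for $\pi$, by induction on $q$ with $C(S,0)=\emptyset$. For $q=1$, \textit{$\mathcal{F}$-capacity-filling} gives $C(S,1)=\{a\}$, and every other $b\in S$ satisfies $a\mathrel{R_1^\mathcal{F}}b$ (singletons are feasible), whence $a=\max(S,\succ_1)$. For the step, assume $C(S,q-1)$ agrees with the lexicographic choice. If no feasible extension of $C(S,q-1)$ exists, the structural lemma gives $C(S,q)=C(S,q-1)$, matching the halted procedure. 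Otherwise let $x_q$ be the lexicographic candidate; since $C(S,q-1)\cup\{x_q\}\in\mathcal{F}$, were $C(S,q)=C(S,q-1)$ then \textit{$\mathcal{F}$-capacity-filling} would force $|C(S,q)|=q$, contradicting $|C(S,q)|=|C(S,q-1)|$. Hence $C(S,q)=C(S,q-1)\cup\{y\}$ with $y$ feasibly extending $C(S,q-1)$. Taking $S$ itself as witness, $y\mathrel{R_q^\mathcal{F}}b$ for every other alternative $b$ that feasibly extends $C(S,q-1)$ (all four defining conditions hold with this $S$), so $y\succ_q b$, and therefore $y$ is the $\succ_q$-maximal feasible extension, i.e.\ $y=x_q$.

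I expect the main obstacle to be twofold. First, the structural lemma is where downward-closedness does essential work, converting ``\textit{$\mathcal{F}$-capacity-filling} leaves a maximal feasible set'' into ``no larger feasible subset exists inside $S$''; without it the induction cannot even define the added element. Second, unlike in Theorem~\ref{thm1}, one cannot read the orderings off $C(A,\cdot)$ by swapping alternatives in and out, because feasibility forbids free substitution; this is precisely why the argument must route through a revealed-preference relation that is extended to a linear order, and why \textit{acyclicity} (\textit{CSARP}) rather than mere asymmetry (a \textit{CWARP}-type condition) is needed: asymmetry would not guarantee that $\mathrel{R_q^\mathcal{F}}$ embeds in a single priority ordering, whereas the extension step requires its transitive closure to be a strict partial order. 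The key simplification that makes the induction close is the observation that the single witness set $S$ already reveals every comparison needed to identify $x_q$, so no additional consistency axiom is required.
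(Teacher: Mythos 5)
Your proof is correct and follows essentially the same route as the paper's: necessity by showing $\mathrel{R_q^\mathcal{F}}\subseteq\succ_q$, and sufficiency by completing the transitive closure of $\mathrel{R_q^\mathcal{F}}$ to obtain $(\succ_1,\ldots,\succ_n)$ and then inducting on capacity to identify $C(S,q)\setminus C(S,q-1)$ with $\max(\mathcal{F}(S|_{C(S,q-1)}),\succ_q)$. Your only departures are cosmetic: you package the ``grows by at most one'' fact as an explicit structural lemma (the paper leaves it implicit in the induction) and you argue the inductive step directly, showing the added element reveals itself preferred to every feasible extension, whereas the paper runs the mirror-image contradiction from the maximal element not being chosen.
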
 

\begin{proof}
        
(\textit{Only if part:}) Let $C$ be a feasibility-constrained choice rule that is lexicographic for $(\succ_1, \ldots, \succ_n)$.
Using similar arguments as in the proof of Theorem \ref{thm1}, one can easily verify that $C$ satisfies $\mathcal{F}$-capacity-filling and monotonicity. 

To see that $C$ satisfies CSARP, note  that for each capacity $q\in \{1, \ldots, n\}$ and $a,b\in A$, if $a \mathrel{R_q^\mathcal{F}}b$, then we must have $a \succ_q b$. 
Since   $ \succ_q $ is transitive, $ R_q^\mathcal{F} $ is acyclic.
        
(\textit{If part:}) Let $C$ be a feasibility-constrained choice rule that satisfies $\mathcal{F}$-capacity-filling,  monotonicity, and CSARP. 
It follows from CSARP that for each $q\in \{1, \ldots, n\}$, 
$R_q^\mathcal{F}$ is acyclic. Now, for each capacity $q\in \{1, \ldots, n\}$, let $\succ_q$ be any completion of the transitive closure of $R_q^\mathcal{F}$.
Next, we show that $C$ is lexicographic for $(\succ_1, \ldots, \succ_n)$. 
To see this, we apply induction on capacity $q$. 
Before proceeding, let us introduce some notation. 
For each $S,T\in  \mathcal{A}$ such that $T\subset S$, let $\mathcal{F}(S|_{T})$ be the set of alternatives in $S\setminus T$ that induce a feasible set together with the alternatives in $T$, i.e. $\mathcal{F}(S|_{T})=\{a\in S\setminus T : T\cup \{a \}\in  \mathcal{F}  \}$.

First, we show that for each $S\in \mathcal{A}$, $C(S,1)=max(S, \succ_1)$. By contradiction suppose that although $a=max(S, \succ_1)$, we have $C(S,1)=b$, where $a\not = b$. 
Since $C(S,1)=b$ and $a\in S$, it follows that $b\mathrel{R_1^\mathcal{F}}a $, which contradicts that $a=max(S, \succ_1)$.
Next, assume that for some $q\in \{2, \ldots, n\}$, we have 
for each $S\in \mathcal{A}$ and $q'<q$, $C(S, q')$ coincides with the lexicographic choice for $(\succ_1, \ldots, \succ_{q-1})$. 
Now, we show that for each $S\in \mathcal{A}$, 
$C(S,q)\setminus C(S, q-1)=max(\mathcal{F}(S|_{C(S, q-1)}), \succ_q)$.
First, let $a=max(\mathcal{F}(S|_{C(S, q-1)}), \succ_q)$. 
It follows that  $a\notin C(S, q-1)$  and $C(S, q-1)\cup \{a\}\in \mathcal{F}$. 
By contradiction, suppose that $a\notin C(S, q)$. Since  $C(S, q-1)\cup \{a\}\in \mathcal{F}$, it follows from $\mathcal{F}$-capacity-filling that there exists $x\in C(S,q)\setminus C(S, q-1)$ such that $x\not = a$. 
Now, since $C$ satisfies monotonicity, $x\notin C(S, q-1)$, and  
since $x\in C(S,q)$, $C(S, q-1)\cup \{x\}\in \mathcal{F}$.
Therefore, we have $x\mathrel{R_q^\mathcal{F}}a $, but this contradicts that $a=max(\mathcal{F}(S|_{C(S, q-1)}), \succ_q)$. 
Thus, we conclude that $C$ is lexicographic for $(\succ_1, \ldots, \succ_n)$.
\end{proof}

\section{Lexicographic Deferred Acceptance Mechanisms}
\label{allocation_model}

Let $N$ denote a finite set of agents, $\left| N\right|=n\geq 2$. Let $\mathcal{A}$ be the collection of all nonempty subsets of $N$. Let $O$ denote a finite set of objects. Each agent $i\in N$ has a complete, transitive, and anti-symmetric preference relation $R_{i}$ over $O\cup \{\emptyset\}$, where $\emptyset$ is the null object representing the option of receiving no object (or receiving an outside option). Given $x,y\in O\cup \{\emptyset\}$, $x\mathbin{R_{i}}y$ means that either $x=y$ or $x\neq y$ and agent $i$ prefers $x$ to $y$. If agent $i$ prefers $x$ to $y$, we write $x\mathbin{P_{i}}y$. 
Let $\mathcal{R}$ denote the set of all preference relations over $O\cup \{\emptyset\}$, and $\mathcal{R}^{N}$ the set of all preference profiles $R=(R_{i})_{i\in N}$ such that for all $i\in N$, $R_{i}\in \mathcal{R}$.

An allocation problem with capacity constraints, or simply a \textbf{problem}, consists of a preference profile $R\in \mathcal{R}^{N}$ and a capacity profile $q=(q_x)_{x\in O\cup \{\emptyset\}}$ such that for each object $x\in O$, $q_x\in \{0,1,\ldots,n\}$ and $q_{\emptyset }=n$ so that the null object has enough capacity to accommodate all agents. Let $\mathcal{P}$ denote the set of all problems. Given a problem $(R,q)\in \mathcal{P}$, an object $x$ is \textbf{available} at the problem if $q_x>0$. 

Given a capacity profile $q=(q_x)_{x\in O\cup \{\emptyset\}}$, an allocation assigns to each agent exactly one object in $O\cup \{\emptyset\}$ taking capacity constraints into account. Formally, an \textbf{allocation} at $q$ is a list $a=(a_{i})_{i\in N}$ such that for each $i\in N$, $a_{i}\in O\cup \{\emptyset\}$ and no object $x\in O\cup \{\emptyset\}$ is assigned to more than $q_x$ agents. Let $M(q)$ denote the set of all allocations at $q$. 

Given an allocation $a=(a_{i})_{i\in N}$, a preference profile $R$, and an object $x\in O\cup \{\emptyset\}$, let $D_x(a,R)=\{i\in N: x\mathrel{P_i}a_i\}$ denote the \textbf{demand} for $x$ at $(a,R)$, which is the set of agents who prefer $x$ to their assigned object. 

A \textbf{mechanism} is a function $\varphi:\mathcal{P}\rightarrow \bigcup_{q} M(q)$ such that for each allocation problem $(R,q)\in \mathcal{P}$, $\varphi(R,q)\in M(q)$. 

For mechanisms, we introduce the following property, which we call \textit{the irrelevance of satisfied demand}. Consider an arbitrary problem and the allocation chosen by the mechanism at that problem. Suppose that the capacity of an object is increased. Now, some of the agents who prefer that object to their assignments at the initial allocation may receive the object due to the capacity increase.  \textit{The irrelevance of satisfied demand} requires that the set of agents who receive the object due to the capacity increase does not depend on the set of agents who initially receive the object.  
In other words, for two problems with the same capacity, if the demands for an object are the same (note that the set of agents who receive the object at those problems may be different), then whenever the capacity of the object increases, the sets of agents who receive the object due to the capacity increase should be the same for the two problems.

Formally, for each $x\in O$, let $1_x$ be the capacity profile which has $1$ unit of $x$ and nothing else. A mechanism $\varphi$ satisfies \textbf{the irrelevance of satisfied demand} if for each pair of problems $(R,q)$ and $(R',q)$ and each object $x\in O$, if $D_x(\varphi(R,q),R)=D_x(\varphi(R',q),R')$, then $D_x(\varphi(R,q+1_x),R)=D_x(\varphi(R',q+1_x),R')$.

A \textbf{(capacity-constrained) lexicographic choice structure} $\mathcal{C}=(C_x)_{x\in O}$ associates each object $x\in O$ with a lexicographic choice rule $C_x: \mathcal{A} \times \{1,\ldots ,n\} \rightarrow \mathcal{A}$.
Next, we present \textbf{the (capacity-constrained) lexicographic deferred acceptance algorithm based on $\mathcal{C}$}. For each problem $(R,q)\in \mathcal{P}$, the algorithm runs as follows:\\
\textit{Step 1:} Each agent applies to his favorite object in $O$. Each object $x\in O$ such that $q_x> 0$ temporarily accepts the applicants in $\mathcal{C}_x(S_x,q_x)$ where $S_x$ is the set of agents who applied to $x$, and rejects all the other applicants. Each object $x\in C$ such that $q_x= 0$ rejects all applicants.\\
\textit{Step $r\geq 2$:} Each applicant who was rejected at step $r-1$ applies to his next favorite object in $O$. For each object $x\in O$, let $S_{x,r}$ be the set consisting of the agents who applied to $x$ at step $r$ and the agents who were temporarily accepted by $x$ at Step $r-1$. Each object $x\in O$ such that $q_x> 0$ accepts the applicants in $C_x(S_{x,r},q_x)$ and rejects all the other applicants. Each object $x\in O$ such that $q_x= 0$ rejects all applicants.

\noindent The algorithm terminates when each agent is accepted by an object. The allocation where each agent is assigned the object that he was accepted by at the end of the algorithm is called the $C$-lexicographic Deferred Acceptance allocation at $(R,q)$, denoted by $DA^{\mathcal{C}}(R,q)$.

\noindent\textbf{Lexicographic deferred acceptance mechanisms:} A mechanism $\varphi$ is a \textit{lexicographic deferred acceptance mechanism} if there exists a lexicographic choice structure $\mathcal{C}$ such that for each $(R,q)\in \mathcal{P}$, $\varphi(R,q)=DA^{\mathcal{C}}(R,q)$.

\cite{ehlers_klaus_2014b}, in their Theorem $3$, characterize deferred acceptance mechanisms based on a choice structure satisfying \textit{capacity-filling}, \textit{gross substitutes}, and \textit{monotonicity}, with the following properties of mechanisms: unavailable-type-invariance (if the positions of the unavailable types are shuffled at a profile, then the allocation should not change); weak non-wastefulness (no agent receives the null object while he prefers an object that is not exhausted to the null object),\footnote{The stronger version of this property, namely \textit{non-wastefulness}, requires that no agent prefers an object that is not exhausted to his assigned object. Note that \textit{capacity-filling} and \textit{non-wastefulness} are similar in spirit, yet, \textit{capacity-filling} is a property of a choice rule while \textit{non-wastefulness} is a property of a mechanism.} resource-monotonicity (increasing the capacities of some objects does not hurt any agent), truncation-invariance (if an agent truncates his preference relation in such a way that his allotment remains acceptable under the truncated preference relation, then the allocation should not change), and strategy-proofness (no agent can benefit by misreporting his preferences). Next, we formally introduce these properties and state Theorem $3$ of \cite{ehlers_klaus_2014b}.

\noindent \textbf{Unavailable-Type-Invariance:} Let $(R,q)\in \mathcal{P}$ and $R'\in\mathcal{R}^N$. If for each $i\in N$ and each pair of available objects $x,y\in O$ ($q_x>0$, $q_y>0$) we have [$x\mathrel{R_i}y$ if and only if $x\mathrel{R_i'}y$], then $\varphi(R,q)=\varphi(R',q)$.

\noindent \textbf{Weak Non-Wastefulness:} For each $(R,q)\in\mathcal{P}$, each $x\in O$ such that $q_x>0$, and each $i\in N$, if $x\mathbin{P_i}\varphi_i(R,q)$ and $\varphi_i(R,q)=\emptyset$, then $|\{j\in N:\varphi_j(R,q)=x\}|=q_x$.

\noindent \textbf{Resource-Monotonicity:} For each $R\in\mathcal{R}^{N}$, and each pair of capacity profiles $(q,q')$, if for each $x\in O$, $q_x\leq q_x'$, then for each $i\in N$, $\varphi_{i}(R,q')\mathbin{R_i}\varphi_{i}(R,q)$.

\noindent \textbf{Truncation-Invariance:} Let $(R,q)\in \mathcal{P}$ and $R'\in\mathcal{R}^N$. If for each $i\in N$ and each pair of objects $x,y\in O$ we have [$x\mathrel{R_i}y$ if and only if $x\mathrel{R_i'}y$] and $\varphi_i(R,q)\mathrel{R'_i}\emptyset$, then $\varphi(R,q)=\varphi(R',q)$.

\noindent \textbf{Strategy-proofness:} For each $(R,q)\in\mathcal{P}$, each $i\in N$, and each $R_i'\in \mathcal{R}$, $\varphi_{i}(R,q)\mathbin{R_i}\varphi_{i}((R_i',R_{-i}),q)$.

\noindent \textbf{Theorem 3 of Ehlers and Klaus(2016):} \textit{A mechanism is a deferred acceptance mechanism based on a choice structure satisfying capacity-filling, gross substitutes, and monotonicity if and only if it satisfies unavailable-type-invariance, weak non-wastefulness, resource-monotonicity, truncation-invariance, and strategy-proofness.}

The following impossibility result shows that \textit{the irrelevance of satisfied demand} is too strong: there is no mechanism which satisfies it together with the above desirable properties.

\begin{proposition}
\label{impossibility}
Suppose that there are at least three objects, $|O|\geq 3$. There is no mechanism which satisfies unavailable-type-invariance, weak non-wastefulness, resource-monotonicity, truncation-invariance, strategy-proofness, and the irrelevance of satisfied demand.
\end{proposition}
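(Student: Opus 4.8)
The plan is to argue by contradiction, leveraging the characterization quoted as Theorem 3 of Ehlers and Klaus. Suppose some mechanism $\varphi$ satisfied all six properties. Since $\varphi$ then satisfies \emph{unavailable-type-invariance}, \emph{weak non-wastefulness}, \emph{resource-monotonicity}, \emph{truncation-invariance}, and \emph{strategy-proofness}, Theorem 3 of Ehlers and Klaus lets me write $\varphi = DA^{\mathcal{C}}$ for a choice structure $\mathcal{C}=(C_w)_{w\in O}$ in which every $C_w$ satisfies \emph{capacity-filling}, \emph{gross substitutes}, and \emph{monotonicity}. For each object $w$, \emph{capacity-filling} makes $C_w(\cdot,1)$ single-valued and \emph{gross substitutes} makes it contraction-consistent, so $C_w(\cdot,1)$ is rationalized by a linear priority order $\succ_w$ over $N$. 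I will exhibit two problems and an object at which $\varphi$ violates \emph{the irrelevance of satisfied demand}, which is the desired contradiction.

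The construction exploits a rejection chain that runs through a third object, which is why $|O|\ge 3$ is needed. Fix three objects $x,y,z$; set every other object to capacity $0$ (so it rejects everyone and is irrelevant) and put $q_x=q_y=q_z=1$, later raising $q_x$ to $2$. I will use four agents in distinguished roles, call them $1,2,3,4$, and make all remaining agents find every object unacceptable (ranked below $\emptyset$), so they receive $\emptyset$ and never compete for $x,y,z$. Agent $1$ is chosen to be $\succ_x$-maximal among $\{1,2,3\}$, so $C_x(\{1,2,3\},1)=\{1\}$ and by \emph{monotonicity} agent $1$ occupies a seat of $x$ at every capacity. The two profiles $R$ and $R'$ differ only in how agent $2$ ranks $x$ relative to $y$: under $R$ agent $2$ ranks $x$ first, while under $R'$ agent $2$ ranks $y$ first; the preferences are designed so that the allocation at $q$ is the same under both, with agent $1$ at $x$, agent $4$ at $y$, agent $2$ at $z$, and agent $3$ at $\emptyset$. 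Consequently $D_x(\varphi(R,q),R)=D_x(\varphi(R',q),R')=\{2,3\}$. The pivotal order relation is $3\succ_y 2$: at capacity $q_x=1$ agent $3$ is rejected by $x$, spills over to $y$, and displaces agent $2$ there, which forces agent $2$ down to $z$ and leaves both $2$ and $3$ demanding $x$.

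Raising $q_x$ to $2$ then resolves the two profiles differently. Under $R'$, object $x$ can now keep both $1$ and $3$, so agent $3$ never invades $y$; agent $2$ (who ranks $y$ first) therefore holds $y$, and one checks that every agent lands on a top-ranked available object, whence $D_x(\varphi(R',q+1_x),R')=\emptyset$. Under $R$, agent $2$ ranks $x$ first and contests the freed seat, so the occupants of $x$ become $C_x(\{1,2,3\},2)$; by \emph{capacity-filling} this set has size two and by \emph{monotonicity} it contains $1$, hence it equals $\{1,2\}$ or $\{1,3\}$. In either case the excluded member of $\{2,3\}$ is rejected by $x$ and ends up at a strictly less preferred object (the chain sending it to $y$ or to $z$ is governed only by $\succ_y,\succ_z$) while still preferring $x$, so $D_x(\varphi(R,q+1_x),R)$ is the nonempty singleton $\{3\}$ or $\{2\}$. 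Since $\emptyset\neq D_x(\varphi(R,q+1_x),R)$ although $D_x(\varphi(R,q),R)=D_x(\varphi(R',q),R')$, the mechanism violates \emph{the irrelevance of satisfied demand}, the contradiction I sought. Note that the argument invokes $C_x$ at capacity $2$ only through its value on $\{1,2,3\}$, disposed of by the two cases, and invokes $C_y,C_z$ only at capacity $1$, where they reduce to the linear orders $\succ_y,\succ_z$; this is what makes the construction robust to the unknown choice structure delivered by Theorem 3 of Ehlers and Klaus.

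The step I expect to be the main obstacle is making the construction work for an \emph{arbitrary} triple of induced priority orders $\succ_x,\succ_y,\succ_z$ rather than the convenient ones sketched above. The roles of agents $1,2,3,4$ must be assigned as a function of these orders, and the relations needed simultaneously, namely $1$ being $\succ_x$-top, the pivotal $3\succ_y 2$, and the $\succ_z$- and $\succ_y$-comparisons that pin down the capacity-$q$ allocation, must be realizable at once; when the orders are adversarial this may require a few additional ``filler'' agents that rank seats so as to occupy or clear $y$ and $z$ and thereby steer the rejection chain. Verifying that the deferred acceptance runs at $q$ and at $q+1_x$ produce exactly the claimed assignments, so that $D_x$ agrees at $q$ but disagrees at $q+1_x$, is the computational heart of the proof.
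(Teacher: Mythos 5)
Your opening move coincides with the paper's: invoke Theorem 3 of Ehlers and Klaus to represent the hypothetical mechanism as $DA^{\mathcal{C}}$ for a choice structure satisfying \emph{capacity-filling}, \emph{gross substitutes}, and \emph{monotonicity}, then exhibit two problems violating \emph{the irrelevance of satisfied demand}. But your construction has a genuine flaw beyond the incompleteness you yourself flag at the end. Trace agent $4$'s constraints. For $D_x(\varphi(R,q),R)=\{2,3\}$ you need $4\notin D_x$ at the base problem, so agent $4$ must rank $y$ above $x$; for agent $4$ to end up holding $y$ at capacity $q$ while your pivotal chain (agent $3$ spilling into $y$ and displacing agent $2$) operates, the final holder of $y$ must be the $\succ_y$-maximum of all who ever applied, forcing $4\succ_y 3\succ_y 2$. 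But then at $q+1_x$ under $R'$, agent $4$ still proposes to $y$ first and still beats agent $2$ there, so agent $2$ cannot ``hold $y$'' as you claim; agent $2$ is rejected from $y$ and proposes to $x$, so under \emph{both} profiles the contest at $x$ is exactly $C_x(\{1,2,3\},2)$, and in each branch of your dichotomy the subsequent cascade is identical: if $C_x(\{1,2,3\},2)=\{1,2\}$ then $D_x=\{3\}$ under both $R$ and $R'$, and if $C_x(\{1,2,3\},2)=\{1,3\}$ then $D_x=\{2\}$ under both. The demand sets agree at $q+1_x$, and no violation materializes. The root cause is structural: since $R$ and $R'$ differ only in agent $2$'s relative ranking of two objects that are both available and both acceptable, the deferred acceptance outcomes are too rigid for that swap to separate the post-increase demands once all base-problem constraints are respected.

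The paper sidesteps all of this chain engineering with one trick you are missing: make the object whose capacity increases \emph{unavailable} at the base problem. With $q_b=1$, $q_a=0$, and all else zero, the base allocations under $R$ (agent $i$: $a,b,\emptyset$; agent $j$: $b,a,\emptyset$) and $R'$ (both: $a,b,\emptyset$) trivially coincide\textemdash everyone competes only for $b$\textemdash so $D_a(\varphi(R,q),R)=D_a(\varphi(R',q),R')=\{i,j\}$ with no case analysis over unknown priorities. The hypothesis $|O|\geq 3$ enters only through a pigeonhole: among three objects, at least two, say $a$ and $b$, must make the same capacity-$1$ choice from $\{i,j\}$, say both choose $i$. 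Raising $q_a$ from $0$ to $1$ then gives $D_a=\emptyset$ under $R$ (agent $j$ keeps $b$, which he prefers to $a$) but $D_a=\{j\}$ under $R'$. Note that this argument needs only \emph{capacity-filling} and the DA structure in the separation step, and it is automatically robust to the adversarial choice structure\textemdash precisely the obstacle you correctly identified as the ``computational heart'' of your approach but did not resolve, and which, as shown above, your specific role assignment in fact cannot resolve.
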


\begin{proof}
Suppose that there exists such a mechanism, say $\varphi$, which satisfies all the properties in the statement except for \textit{the irrelevance of satisfied demand}. We will show that it must violate \textit{ the irrelevance of satisfied demand}. By Theorem $3$ of \cite{ehlers_klaus_2014b}, $\varphi$ is a deferred acceptance mechanism based on a choice structure $\mathcal{C}=(C_x)_{x\in O}$ which satisfies \textit{capacity-filling}, \textit{gross substitutes}, and \textit{monotonicity}. 

Let $i,j\in N$ be two distinct agents. We first claim that there exist two distinct objects $a,b\in O$ such that $i\in C_a(\{i,j\},1)\cap C_b(\{i,j\},1)$ and $j\notin C_a(\{i,j\},1)\cup C_b(\{i,j\},1)$. That is, when there is only one unit of $a$ or $b$, $i$ is chosen but $j$ is not from $\{i,j\}$. To see this, let $x,y,z\in O$ be three distinct objects. By \textit{capacity-filling}, either $\{i\}= C_x(\{i,j\},1)$ or $\{j\}= C_x(\{i,j\},1)$. Without loss of generality, suppose that $\{i\}= C_x(\{i,j\},1)$. Again by \textit{capacity-filling}, either $\{i\}= C_y(\{i,j\},1)$ or $\{j\}= C_y(\{i,j\},1)$. If $\{i\}= C_y(\{i,j\},1)$, then we are done. Otherwise, by \textit{capacity-filling}, either $\{i\}= C_z(\{i,j\},1)$ or $\{j\}= C_z(\{i,j\},1)$, and in either case, we are done.

So, suppose that there exist two distinct objects $a,b\in O$ such that $i\in C_a(\{i,j\},1)\cap C_b(\{i,j\},1)$ and $j\notin C_a(\{i,j\},1)\cup C_b(\{i,j\},1)$. Let the preference profiles $R$ and $R'$ be such that every agent other than $i$ and $j$ find any object unacceptable and $R_i, R_j, R'_i$ and $R'_j$ are as depicted below.  
\begin{figure}[H]
        
        \begin{center}
        \begin{tabular}{cccccc}
        $R_{i}$& $R_{j}$& & &$R'_{1}$& $R'_{2}$ \\
\cline{1-2} \cline{5-6}
        $a$&$b$& & &$a$&$a$\\
        $b$&$a$& & &$b$&$b$\\
        $\emptyset$&$\emptyset$&  & &$\emptyset$&$\emptyset$ 
\end{tabular}
\end{center}

\end{figure}
Let $q$ be such that $q_b=1$ and $q_x=0$ for any $x\in O\setminus \{b\}$. Let $q'$ be such that $q'_a=q'_b=1$ and $q'_x=0$ for any $x\in O\setminus \{a,b\}$. Since $\varphi$ is a deferred acceptance mechanism based on $\mathcal{C}=(C_x)_{x\in O}$, $D_a(\varphi(R,q),R)=D_a(\varphi(R',q),R')=\{i,j\}$. However, $D_a(\varphi(R,q'),R)=\emptyset$ and $D_a(\varphi(R',q'),R')=\{j\}$, implying that $\varphi$ violates \textit{the irrelevance of satisfied demand}.

\end{proof}

Since \textit{the irrelevance of satisfied demand} is too strong, we consider the following weakening of it which requires that \textit{at any problem where there is only one available object}, the set of agents who receive the object due to a capacity increase does not depend on the set of agents who initially receive the object. 

Formally, a mechanism $\varphi$ satisfies \textbf{the weak irrelevance of satisfied demand} if for any pair of problems $(R,q)$ and $(R',q)$ and each object $x\in O$ such that for each $y\in O\setminus \{x\}$, $q_y=0$, $D_x(\varphi(R,q),R)=D_x(\varphi(R',q),R')$ implies $D_x(\varphi(R,q+1_x),R)=D_x(\varphi(R',q+1_x),R')$.

Our next result shows that \textit{the weak irrelevance of satisfied demand} together with the above properties characterize lexicographic deferred acceptance mechanisms.  

\begin{proposition}
\label{ek_corollary}
A mechanism is a lexicographic deferred acceptance mechanism if and only if it satisfies unavailable-type-invariance, weak non-wastefulness, resource-monotonicity, truncation-invariance, strategy-proofness, and the weak irrelevance of satisfied demand.
\end{proposition}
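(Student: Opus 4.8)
The plan is to reduce the statement to a single-object choice-theoretic equivalence by combining Theorem~3 of \cite{ehlers_klaus_2014b} with Theorem~\ref{thm1}. Both characterizations share the five properties unavailable-type-invariance, weak non-wastefulness, resource-monotonicity, truncation-invariance, and strategy-proofness, which by Theorem~3 of \cite{ehlers_klaus_2014b} are equivalent to $\varphi$ being a deferred acceptance mechanism based on a choice structure $\mathcal{C}=(C_x)_{x\in O}$ in which each $C_x$ satisfies \textit{capacity-filling}, \textit{gross substitutes}, and \textit{monotonicity}. By Theorem~\ref{thm1}, such a $\varphi$ is \emph{lexicographic} exactly when each $C_x$ additionally satisfies \textit{the irrelevance of accepted alternatives}. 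Hence it suffices to prove that, for a deferred acceptance mechanism based on a choice structure of this type, \textit{the weak irrelevance of satisfied demand} holds if and only if every $C_x$ satisfies \textit{the irrelevance of accepted alternatives}.

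The first key step is to compute the mechanism on problems with a single available object $x$, i.e.\ $q_y=0$ for all $y\neq x$. There the algorithm terminates with each agent holding $x$ or $\emptyset$, and the set of agents who ever apply to $x$ is exactly $S_x=\{i: x\mathbin{P_i}\emptyset\}$, so $\varphi$ assigns $x$ to $C_x(S_x,q_x)$. Consequently the demand satisfies $D_x(\varphi(R,q),R)=S_x\setminus C_x(S_x,q_x)=R_x(S_x,q_x)$; that is, the demand for $x$ is precisely the rejected set of $C_x$. Moreover, by choosing preferences so that exactly the agents of a prescribed nonempty set rank $x$ first and acceptable (all other objects being unavailable), one realizes any desired applicant set $S$, and unavailable-type-invariance makes the preferences over the other objects irrelevant. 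Thus \textit{the weak irrelevance of satisfied demand}, restricted to single-object problems, translates verbatim into the requirement that, for each $x$, $R_x(S,q)=R_x(S',q)$ implies $R_x(S,q+1)=R_x(S',q+1)$.

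The second key step is to show, under \textit{capacity-filling} and \textit{monotonicity}, that this rejected-set condition is equivalent to \textit{the irrelevance of accepted alternatives}. Monotonicity gives $R_x(S,q+1)\subseteq R_x(S,q)$, whence a short set computation yields $R_x(S,q+1)=R_x(S,q)\setminus[C_x(S,q+1)\cap R_x(S,q)]$. Therefore, when $R_x(S,q)=R_x(S',q)$, the equality $R_x(S,q+1)=R_x(S',q+1)$ holds if and only if $C_x(S,q+1)\cap R_x(S,q)=C_x(S',q+1)\cap R_x(S',q)$, which is exactly \textit{the irrelevance of accepted alternatives}. Combining the two steps, \textit{the weak irrelevance of satisfied demand} of $\varphi$ is equivalent to each $C_x$ satisfying \textit{the irrelevance of accepted alternatives}.

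Assembling the directions is then immediate. If $\varphi$ is a lexicographic deferred acceptance mechanism, each $C_x$ is lexicographic, so by Theorem~\ref{thm1} it satisfies the four choice properties (giving the five properties via Theorem~3 of \cite{ehlers_klaus_2014b}) and in particular \textit{the irrelevance of accepted alternatives}, which by the equivalence yields \textit{the weak irrelevance of satisfied demand}. Conversely, the five properties place $\varphi$ among deferred acceptance mechanisms whose choice rules satisfy \textit{capacity-filling}, \textit{gross substitutes}, and \textit{monotonicity}; \textit{the weak irrelevance of satisfied demand} then forces each $C_x$ to satisfy \textit{the irrelevance of accepted alternatives}, so by Theorem~\ref{thm1} each $C_x$ is lexicographic and $\varphi$ is a lexicographic deferred acceptance mechanism. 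The main obstacle is the first step, namely verifying that single-object problems recover exactly $C_x$ and its rejected sets and that arbitrary applicant sets are realizable, since everything else reduces to bookkeeping with the already-established theorems.
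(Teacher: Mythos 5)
Your proposal is correct and follows essentially the same route as the paper: both reduce the statement to Theorem~3 of \cite{ehlers_klaus_2014b} plus Theorem~\ref{thm1}, observe that on single-available-object problems the deferred acceptance outcome is exactly $C_x(S(R),q_x)$ so that demand equals the rejected set, and then translate \emph{the weak irrelevance of satisfied demand} into \emph{the irrelevance of accepted alternatives} for each $C_x$. The only cosmetic difference is that you make explicit, via the identity $R_x(S,q+1)=R_x(S,q)\setminus\bigl[C_x(S,q+1)\cap R_x(S,q)\bigr]$, the equivalence the paper labels ``easy to see,'' and in the converse direction you invoke \emph{the irrelevance of accepted alternatives} of lexicographic rules where the paper argues directly that the newly admitted agent is the $\succ_{q_x+1}$-maximal element of the common rejected set.
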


\begin{proof}
The following notation will be helpful. For each $x\in O$, let $R^x$ be a preference relation such that $x$ is top-ranked and $\emptyset$ is second-ranked. For each $S\in \mathcal{A}$ that is nonempty, let $R_S^x$ be a preference profile such that for each $i\in S$, $(R_S^x)_i=R^x$, and for each $j\notin S$, $(R_S^x)_j$ top-ranks $\emptyset$. For each $x\in O$ and $l\in \{0, \ldots ,n\}$, let $l_x$ denote the capacity profile where $x$ has capacity $l$ and every other object has capacity zero.

Let $\varphi$ be a mechanism satisfying the properties in the statement of the theorem. Let $\mathcal{C}=(C_x)_{x\in O}$ be defined as follows. For each $x\in O$, $S\in \mathcal{A}$, and  $l\in \{0, \ldots ,n\}$, $C_x(S,l)=\{i\in S:\varphi_i(R_S^x,l_x)=x\}$. This choice structure is the same as the one constructed in the proof of Theorem $3$ of \cite{ehlers_klaus_2014b}.   

By \textit{weak non-wastefulness}, $C_x$ satisfies \textit{capacity-filling}. By \textit{resource-monotonicity}, $C_x$ satisfies \textit{monotonicity}. By Lemma 2 of \cite{ehlers_klaus_2014b}, $C_x$ satisfies \textit{gross substitutes}. By Theorem $3$ of \cite{ehlers_klaus_2014b}, $\varphi$ is a deferred acceptance mechanism based on $\mathcal{C}$. It is easy to see that, since $\varphi$ satisfies \textit{the irrelevance of satisfied demand}, for each $x\in O$, $C_x$ satisfies \textit{the irrelevance of accepted alternatives}. Thus, $\mathcal{C}$ is a lexicographic choice structure and $\varphi$ is a lexicographic deferred acceptance mechanism.

Let $\varphi$ be a lexicographic deferred acceptance mechanism. We will show that it satisfies \textit{irrelevance of satisfied demand}. The other properties follow from Theorem $3$ of \cite{ehlers_klaus_2014b}. Let $\mathcal{C}=(C_x)_{x\in O}$ be a lexicographic choice structure such that $\varphi=DA^\mathcal{C}$. Let $(R,q),(R',q) \in  \mathcal{P}$ and $x\in O$ be such that for each $y\in O\setminus \{x\}$, $q_y=0$ and let $T\equiv D_x(DA^\mathcal{C}(R,q),R)=D_x(DA^\mathcal{C}(R',q),R')$. Let $C_x$ be lexicographic for the priority profile $(\succ_1, \ldots ,\succ_n)\in \Pi$. Let $S(R)$ and $S(R')$ be the sets of agents who prefer $x$ to $\emptyset$ at $R$ and at $R'$, respectively. It is easy to see that $DA^\mathcal{C}(R,q)=C_x(S(R),q)$, $DA^\mathcal{C}(R',q)=C_x(S(R'),q)$, and $T=S(R)\setminus C_x(S(R),q)=S(R')\setminus C_x(S(R'),q)$. Let $i\in T$ be the agent who is highest ranked according to $\succ_{q_x+1}$ in $T$. Clearly, $DA^\mathcal{C}(R,q+1_x)=DA^\mathcal{C}(R,q)\cup \{i\}$ and $DA^\mathcal{C}(R',q+1_x)=DA^\mathcal{C}(R',q)\cup \{i\}$. Hence, $D_x(DA^\mathcal{C}(R,q+1_x),R)=D_x(DA^\mathcal{C}(R',q+1_x),R')=T\setminus \{i\}$.
\end{proof}

\begin{remark}
In Appendix \ref{appendix:example}, we provide an example of a mechanism which satisfies all the properties in the statement of Proposition~\ref{ek_corollary} except for \textit{the irrelevance of satisfied demand}, and therefore which is not a lexicographic deferred acceptance mechanism.
\end{remark}

\section{Implications for School Choice in Boston}
\label{boston}

In the Boston school choice system, there are two different priority orderings at each school: a \textit{walk-zone priority ordering}, which gives priority to the school's neighborhood students over all the other students, and an \textit{open priority ordering} which does not give priority to any student for being a neighborhood student. The Boston school district aims to assign half of the seats of each school based on the walk-zone priority ordering and the other half based on the open priority ordering. To achieve this aim, given the capacity, each school chooses students in a lexicographic way according to a priority profile where half of the priority orderings is the \textit{walk-zone priority ordering} and the other half is the \textit{open priority ordering}.

In a recent study, \cite{dur2016} note that two priority profiles with the same numbers of walk-zone and open priority orderings, but with different precedence orders of the priority orderings, may result in different choices under a lexicographic choice procedure. Starting with this observation, \cite{dur2016} compare four different choice rules, one of which is the one used in the Boston school district,  in terms of how much they are biased for or against the neighbourhood students.\@ In this section, we will consider these alternative choice rules from a different perspective.\@ We will show that, although these choice rules are all based on a ``lexicographic procedure" at each capacity, only one of them satisfies all the characterizing properties in Theorem \ref{thm1}, and therefore only one of them is actually a \textit{(capacity-constrained) lexicographic choice rule}. 

In order to put the four choice rules in a formal context, let us consider the following class of choice rules which is larger than the class of lexicographic choice rules. We say that a choice rule is \textbf{capacity-wise lexicographic} if there exists a list of priority orderings for each capacity level (the number of priority orderings is the same as the capacity), and at each capacity, the rule operates based on the associated list of priority orderings in a lexicographic way. For a capacity-wise lexicographic choice rule, unlike a lexicographic choice rule, the lists for different capacity levels are not necessarily related.

The capacity-wise lexicographic choice rules that can serve the Boston school district's purpose are the choice rules for which, at each capacity, the associated list consists of only the walk-zone priority ordering and the open priority ordering, and the absolute difference between the numbers of walk-zone and open priority orderings in the list is at most one. We formalize this property as follows. 

Let $\succ^w$ and $\succ^o$ be walk-zone and open priority orderings. We say that a capacity-wise lexicographic choice rule satisfies the \textbf{Boston requirement for} ($\succ^w$,$\succ^o$) if for each capacity $q$, the associated list of priority orderings $(\succ_1,\ldots ,\succ_q)$ is such that

\begin{enumerate}
        \item[i.] for each $l\in \{1,\ldots ,q\}$, $\succ_l\in \{\succ^w,\succ^o\}$,
        \item[ii.] 
        difference between the number of $\succ^w$-priorities and $\succ^o$-priorities is at most one, i.e.
        $\Big|\sum_{i=1}^q{1_{\succ^w}(\succ_i)}-\sum_{i=1}^q{1_{\succ^o}(\succ_i)}\Big|\leq 1.\footnote{$1_x(y)$ is the indicator function which has the value $1$ if $x=y$ and $0$ otherwise. }$
\end{enumerate}

Now, it turns out that the following class of capacity-wise lexicographic choice rules are the only rules satisfying our set of properties together with the Boston requirement for ($\succ^w$,$\succ^o$).

\begin{proposition}
        \label{prop_boston}
        A capacity-wise lexicographic choice rule satisfies \textit{capacity-filling}, \textit{gross substitutes}, \textit{monotonicity}, the \textit{capacity-wise weak axiom of revealed preference}, and the \textit{Boston requirement for} ($\succ^w$,$\succ^o$) if and only if it is (capacity-constrained) lexicographic for a priority profile $(\succ_1, \ldots ,\succ_n)$ such that
        
        \begin{enumerate}
                \item[i.] for each $l\in \{1,\ldots ,n\}$, $\succ_l\in \{\succ^w,\succ^o\}$,
                \item[ii.] for each $l$ that is odd, $\succ_l=\succ^w$ if and only if $\succ_{l+1}=\succ^o$.
        \end{enumerate}
\end{proposition}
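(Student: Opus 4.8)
The plan is to treat the two directions separately, leaning on Corollary~\ref{corollary_rejection_monotonicity} and on the preceding uniqueness Proposition.

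For the \textit{if} direction, suppose $C$ is lexicographic for a profile $(\succ_1,\ldots,\succ_n)$ satisfying (i) and (ii). By Theorem~\ref{thm1} and Corollary~\ref{corollary_rejection_monotonicity}, $C$ already satisfies \textit{capacity-filling}, \textit{gross substitutes}, \textit{monotonicity}, and \textit{CWARP}, so only the \textit{Boston requirement} remains. Since $C$ is lexicographic, at each capacity $q$ one may take the associated list to be the prefix $(\succ_1,\ldots,\succ_q)$; by (i) each of its entries lies in $\{\succ^w,\succ^o\}$, and by (ii) the positions partition into the consecutive pairs $(1,2),(3,4),\ldots$, each containing exactly one $\succ^w$ and one $\succ^o$. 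Hence any prefix of even length $2k$ contains $k$ copies of each, and any prefix of odd length $2k+1$ contains $k$ of each plus one extra; in either case the count difference is at most one, which is exactly condition (ii) of the Boston requirement. Thus $C$ is capacity-wise lexicographic and meets the Boston requirement.

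For the \textit{only if} direction, suppose the capacity-wise lexicographic rule $C$ satisfies the four axioms and the Boston requirement. By Corollary~\ref{corollary_rejection_monotonicity}, $C$ is lexicographic for some profile $(\succ_1,\ldots,\succ_n)$, and by the uniqueness Proposition the ordering $\succ_1$ and each restriction $\succ_t|_{A_t}$ (where $A_t=A\setminus C(A,t-1)$) are completely pinned down by $C$. The goal is to show that each such layer agrees on $A_t$ with $\succ^w$ or with $\succ^o$, so that we may extend $\succ_t$ to the corresponding full walk-zone or open ordering to secure (i), and that consecutive odd/even layers use different orderings to secure (ii). I would first read off $\succ_1$ at capacity one: since $C(\cdot,1)$ is lexicographic for the length-one list $L_1\in\{(\succ^w),(\succ^o)\}$, we have $C(S,1)=\max(S,\succ^w)$ for every $S$ or $C(S,1)=\max(S,\succ^o)$ for every $S$, forcing $\succ_1\in\{\succ^w,\succ^o\}$. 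For the inductive step I would isolate the $t$-th layer by evaluating $C$ on a set $S$ whose first $t-1$ picks are fixed and which leaves exactly two candidates $a,b$ in $A_t$; then $C(S,t)\setminus C(S,t-1)=\max(\{a,b\},\succ_t|_{A_t})$, while the same element is produced by running the balanced list $L_t$ lexicographically. Comparing the two, together with \textit{monotonicity} (which forces $C(S,t-1)\subseteq C(S,t)$) and \textit{gross substitutes}, should show that the single element added in passing from capacity $t-1$ to $t$ is governed by whichever of $\succ^w,\succ^o$ is in surplus in $L_t$ relative to $L_{t-1}$; tracking this surplus across capacities, and using that every prefix of every list is balanced, would then deliver the alternating pattern (ii).

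The main obstacle is precisely this layer-extraction step. Because the lists $L_{t-1}$ and $L_t$ are not assumed to be related, one cannot simply ``peel off'' the last priority of $L_t$; the identification of $\succ_t|_{A_t}$ with $\succ^w|_{A_t}$ or $\succ^o|_{A_t}$ must instead be forced by combining \textit{monotonicity}, \textit{gross substitutes}, and \textit{CWARP}, the last of these guaranteeing that the newly accepted element is dictated by a single asymmetric ordering. Particular care is needed when $\succ^w$ and $\succ^o$ rank some high alternatives identically, since then distinct balanced lists can induce the same early choices; to rule out a drift of the effective layers away from the alternating $\{\succ^w,\succ^o\}$ pattern, the balance condition must be exploited globally across all prefixes rather than one layer at a time.
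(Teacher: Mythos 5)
Your \emph{if} direction is complete and correct, and it matches what the paper leaves implicit: every prefix of a profile satisfying (i) and (ii) is a balanced list, so the Boston requirement follows, and the four axioms come from Theorem~\ref{thm1} and Corollary~\ref{corollary_rejection_monotonicity}. Your \emph{only if} skeleton is also the paper's route (the published proof is just ``By Theorem~\ref{thm1}\ldots must be lexicographic. The rest is straightforward.''): invoke Corollary~\ref{corollary_rejection_monotonicity} to get a single representing profile, then use the uniqueness proposition to reduce everything to identifying each layer restriction $\succ_t|_{A_t}$ with $\succ^w|_{A_t}$ or $\succ^o|_{A_t}$ and to bookkeeping the alternation. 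But you never carry out that identification step, and you say so yourself; as written, the only-if direction is a plan, not a proof.

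The gap you flag is genuine, and in fact worse than you suspect: for an arbitrary pair $(\succ^w,\succ^o)$ the identification \emph{cannot} be forced from \emph{capacity-filling}, \emph{gross substitutes}, \emph{monotonicity}, \emph{CWARP} and the Boston requirement alone. Take $A=\{h,a,b,y,z\}$ with $\succ^w\colon h\,a\,b\,y\,z$ and $\succ^o\colon b\,a\,h\,z\,y$, and the balanced lists $L_1=(\succ^w)$, $L_2=(\succ^o,\succ^w)$, $L_3=(\succ^o,\succ^w,\succ^w)$, $L_4=(\succ^o,\succ^w,\succ^w,\succ^o)$, $L_5=(\succ^o,\succ^w,\succ^w,\succ^o,\succ^w)$. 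A set-by-set check shows this capacity-wise rule coincides, at every capacity, with the capacity-constrained lexicographic rule for the single profile $(\succ^w,\succ_2,\succ^w,\succ^o,\cdot)$ where $\succ_2\colon b\,a\,y\,z\,h$ is a genuine blend of the two orderings (the last coordinate being redundant); hence all four axioms hold. Yet $C(\{h,a,b\},2)=\{h,b\}$ rules out $\succ_2=\succ^w$, $C(\{h,y,z\},2)=\{h,y\}$ rules out $\succ_2=\succ^o$, and capacity-one behavior forces $\succ_1=\succ^w$, so no profile satisfying (i) represents this rule. The mechanism is exactly the ``drift'' you worried about: the tournament defining $\succ_2|_{A_2}$ resolves the pair $\{a,b\}$ by $\succ^o$ (contested in sets where the $\succ^o$-maximum differs from the $\succ^w$-maximum) but the pair $\{y,z\}$ by $\succ^w$ (contested where they coincide), and with the alignment above this mixture is transitive, so \emph{CWARP} never detects it. Note that this $(\succ^w,\succ^o)$ is not a genuine walk-zone/open pair: there is no walk-zone set whose promotion within $\succ^o$ yields $\succ^w$, and it is precisely that promotion structure (which makes such blends collapse into \emph{CWARP} violations) that a completed proof would have to exploit. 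Since neither your sketch nor the step you are trying to reconstruct invokes that structure, your plan of forcing the layers by \emph{CWARP} plus global balance alone would fail; the missing idea is to bring the relationship between $\succ^w$ and $\succ^o$ into the argument, not merely the balance of the lists.
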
 

\begin{proof}
By Theorem~\ref{thm1}, a choice rule satisfying the properties must be lexicographic. The rest is straightforward.
\end{proof}

\noindent Some examples of priority profiles satisfying (i) and (ii) in the statement of Proposition \ref{prop_boston} are $(\succ^w, \succ^o,\succ^w,\succ^o, \ldots)$,
$(\succ^o, \succ^w,\succ^o,\succ^w, \ldots)$, and $(\succ^w, \succ^o,\succ^o,\succ^w, \succ^w,\succ^o, \succ^w,\succ^o,\ldots)$. Some examples that violate (ii) are $(\succ^w, \succ^o,\succ^o,\succ^w, \succ^o,\succ^o, \ldots)$ and 
$(\succ^o, \succ^w,\succ^w,\succ^o, \succ^w,\succ^w, \ldots)$.

Four plausible choice rules stand out from the analysis of \cite{dur2016}, one of which is currently in use in Boston (Open-Walk choice rule). \cite{dur2016} compare the below four choice rules in terms of how much they are biased for or against the neighbourhood students. We will compare the four choice rules with respect to our set of choice rule properties.

\begin{enumerate}
        \item \textit{Walk-Open Choice Rule:} At each capacity, the first half of the priority orderings in the list are the walk-zone priority ordering and the last half are the open priority ordering.
        \item \textit{Open-Walk Choice Rule:} At each capacity, the first half of the priority orderings in the list are the open priority ordering and the last half are the walk-zone priority ordering.
        \item \textit{Rotating Choice Rule:} At each capacity, the first priority ordering in the list is the walk-zone priority ordering, the second is the open priority ordering, the third is the walk-zone priority ordering, and so on.
        \item \textit{Compromise Choice Rule:} At each capacity, the first quarter of the priority orderings in the list are the walk-zone priority ordering, the following half of the priority orderings in the list are the open priority ordering, and the last quarter are again the walk-zone priority ordering.
\end{enumerate}

To be precise, let us introduce the following procedures to accommodate the cases where the capacity is not divisible by two or four.

\begin{itemize}
        \item \textit{Walk-Open Choice Rule:} If the capacity $q$ is an odd number, the first $\frac{q+1}{2}$ are the walk-zone priority ordering.
        \item \textit{Open-Walk Choice Rule:} If the capacity $q$ is an odd number, the first $\frac{q+1}{2}$ are the open priority ordering.
        \item \textit{Compromise Choice Rule:} If the capacity $q$ is not divisible by four, let $q=q'+k$ for some $q'$ that is divisible by $4$ and some $k\in \{1,2,3\}$. If $k=1$, let the first $\frac{q'}{4}+1$ orderings be the walk-zone priority ordering, the following $\frac{q'}{2}$ orderings be the open priority ordering, and the last $\frac{q'}{4}$ orderings be the walk-zone priority ordering. If $k=2$, let the first $\frac{q'}{4}+1$ orderings be the walk-zone priority ordering, the following $\frac{q'}{2}+1$ orderings be the open priority ordering, and the last $\frac{q'}{4}$ orderings be the walk-zone priority ordering. If $k=3$, let the first $\frac{q'}{4}+1$ orderings be the walk-zone priority ordering, the following $\frac{q'}{2}+1$ orderings be the open priority ordering, and the last $\frac{q'}{4}+1$ orderings be the walk-zone priority ordering.
\end{itemize}

Note that all of the above rules satisfy the Boston requirement for ($\succ^w$,$\succ^o$). Since all of the rules are capacity-wise lexicographic, they satisfy \textit{capacity-filling} and \textit{gross substitutes}. 
It follows from the only if part of Proposition~\ref{prop_boston} that, among these four choice rules, only the Rotating Choice Rule satisfies \textit{capacity-filling}, \textit{gross substitutes}, \textit{monotonicity}, and \textit{the irrelevance of accepted alternatives}. 
However, it is not clear if the other three rules are not lexicographic under variable capacity constraints because they fail to satisfy \textit{monotonicity}, \textit{the irrelevance of accepted alternatives} or both.
Next, we show that the other three rules satisfy \textit{monotonicity}, but they violate \textit{the irrelevance of accepted alternatives}.
To show  that these rules satisfy \textit{monotonicity},  we first
provide an auxiliary condition that is easier to verify and sufficient for \textit{monotonicity}. Next we introduce this condition and prove that it is sufficient for \textit{monotonicity}.

Let $\pi=(\succ_1, \ldots ,\succ_{q})$ and $\pi'=(\succ'_1, \ldots ,\succ'_{q+1})$ be priority lists of size $q$ and $q+1$, respectively. We say that $\pi'$ is \textbf{obtained by insertion} from $\pi$ if there exists $k\in \{1,\ldots ,q+1\}$ such that  $\succ'_l=\succ_l$ for each $l<k$, and  $\succ'_l=\succ_{l-1}$ for each $l >k$. Note that when $\pi'$ is obtained by insertion from $\pi$, a new priority ordering is inserted into the list of priority orderings in $\pi$, by keeping relative order of the other priority orderings in the list the same. It is possible that the new ordering is inserted in the very beginning or in the very end of the list. 

\begin{lemma}
\label{insertion}
Let $C$ be a capacity-wise lexicographic choice rule. The choice rule $C$ is \textit{monotonic} if for each $q\in \{2,\ldots, n\}$, the priority list for $q$ is obtained by insertion from the priority list for $q-1$.  
\end{lemma}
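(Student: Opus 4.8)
The plan is to show that the condition on priority lists---that the list for each capacity $q$ is obtained by insertion from the list for $q-1$---guarantees that $C(S,q)\subseteq C(S,q+1)$ for every choice set $S$ and every capacity $q$. Fix $S\in\mathcal{A}$ and a capacity $q\in\{1,\ldots,n-1\}$. Let $\pi=(\succ_1,\ldots,\succ_q)$ be the list for $q$ and $\pi'=(\succ'_1,\ldots,\succ'_{q+1})$ be the list for $q+1$, and by hypothesis let $k\in\{1,\ldots,q+1\}$ be the insertion position, so that $\succ'_l=\succ_l$ for $l<k$ and $\succ'_l=\succ_{l-1}$ for $l>k$. I would denote by $c_1,\ldots,c_{\min\{|S|,q\}}$ the alternatives chosen in order under $\pi$ and by $c'_1,\ldots,c'_{\min\{|S|,q+1\}}$ those chosen in order under $\pi'$, each being the top-ranked remaining alternative according to the corresponding priority ordering.

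The key observation is that before the insertion point the two lexicographic procedures run identically on the same choice set, so the first $k-1$ selections coincide: $c'_l=c_l$ for all $l<k$ (taking the usual convention that we stop early if $S$ is exhausted). At the $k$'th stage the procedure for $q+1$ inserts one extra selection using the new ordering $\succ'_k$, choosing some alternative from the alternatives that remain after $c_1,\ldots,c_{k-1}$ have been removed. From stage $k+1$ onward, the ordering $\succ'_l=\succ_{l-1}$ applied by $\pi'$ is exactly the ordering $\succ_{l-1}$ that $\pi$ applies one step earlier. The heart of the argument is therefore a claim that the set of alternatives chosen in the first $j$ stages under $\pi$ is always contained in the set chosen in the first $j+1$ stages under $\pi'$; that is, inserting one extra pick can only add to, never remove from, the running set of chosen alternatives.

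I would prove this containment by induction on the stage index. The induction hypothesis is that after $\pi$ has made its first $j$ selections and $\pi'$ has made its first $j+1$ selections, the set chosen by $\pi$ is a subset of the set chosen by $\pi'$. For the inductive step, since $\pi'$ has removed a (weakly) larger set of alternatives, the pool of remaining alternatives for $\pi'$ is a subset of that for $\pi$; the next ordering used is the same in both procedures, and the top-ranked remaining alternative for $\pi$ either has already been chosen by $\pi'$ or is still available and will be selected by $\pi'$ as well, using \textit{gross substitutes}-type reasoning applied directly to the lexicographic selection step. Carrying the induction through stage $q$ under $\pi$ yields $C(S,q)\subseteq C(S,q+1)$, which is precisely \textit{monotonicity}.

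The main obstacle I anticipate is handling the bookkeeping at and around the insertion point cleanly, so that the ``one step earlier'' alignment of orderings after stage $k$ is stated precisely, and so that the boundary cases (when $S$ is exhausted before stage $q$, or when the inserted alternative happens to be one that $\pi$ would have chosen later anyway) are covered without gaps. A careful formulation of the running-subset invariant---phrased so that it holds simultaneously for the pre-insertion, insertion, and post-insertion stages---should absorb these cases, but getting the invariant exactly right is where the real work lies.
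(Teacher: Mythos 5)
Your proposal is correct and follows essentially the same route as the paper's proof: both run an induction along the stages of the two lexicographic procedures, using that the first $k-1$ picks coincide and that after the insertion point the ordering $\succ'_{l}=\succ_{l-1}$ aligns ``one step later,'' so the $(q+1)$-pool at each aligned stage is contained in the $q$-pool and the maximum of the larger pool, if still available, remains the maximum of the smaller one. The only difference is presentational: you phrase the induction as a single running-subset invariant covering all stages uniformly, while the paper fixes an alternative $a\in C(S,q)$ chosen at step $t$ and splits into the cases $t<k$, $t=k$, and $t>k$.
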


\begin{proof}
Let $(S,q)\in \mathcal{A} \times \{1,\ldots ,n-1\}$. Let $\pi=(\succ_1, \ldots ,\succ_{q})$ be the list for capacity $q$. Let $a\in C(S,q)$. Suppose that, in the lexicographic choice procedure, $a$ is chosen at the $t$'th step, i.e.\ $a$ is chosen based on $\succ_t$. 

Let $\pi'=(\succ'_1, \ldots ,\succ'_{q+1})$ be the list for capacity $q+1$ that is obtained by an insertion from $\pi$. Let $k\in \{1,\ldots ,q+1\}$ be such that  $\succ'_l=\succ_l$ for each $l<k$, and  $\succ'_l=\succ_{l-1}$ for each $l >k$.

Now, consider the problem $(S,q+1)$. If $t<k$, clearly $a$ is still chosen at the $t$'th step of the lexicographic choice procedure and thus $a\in C(S,q+1)$. Suppose that $t\geq k$. The rest of the proof is by induction. First, suppose that $t=k$. Note that at Step $k$ of the choice procedure for the problem $(S,q+1)$, the choice is made based on the inserted priority ordering and at Step $k+1$, the choice is made based on $\succ_t$. Then, $a$ is either chosen at Step $k$, or at Step $k+1$, the set of remaining alternatives is a subset of the set of remaining alternatives at Step $t$ of the choice procedure for $(S,q)$ where $a$ is chosen, in which case $a$ is still chosen. Thus, $a\in C(S,q+1)$.

Now, suppose that $t>k$ and each alternative that is chosen at a step $t'<t$ of the choice procedure at $(S,q)$ is also chosen at $(S,q+1)$. Then, $a$ is either chosen before step $t+1$ of the choice procedure for $(S,q+1)$, or at Step $t+1$, the set of remaining alternatives is a subset of the set of remaining alternatives at Step $t$ of the choice procedure for $(S,q)$ where $a$ is chosen, in which case $a$ is still chosen. Thus, $a\in C(S,q+1)$.   
\end{proof}

\begin{proposition}
\label{propp}
All of the four rules satisfy \textit{capacity-filling}, \textit{gross substitutes}, and \textit{monotonicity}, but only the rotating choice rule satisfies \textit{the irrelevance of accepted alternatives} and only the rotating choice rule is (capacity-constrained) lexicographic.
\end{proposition}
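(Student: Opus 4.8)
The plan is to verify the three ``baseline'' properties for all four rules first, and then separate the Rotating rule from the other three via Theorem~\ref{thm1}. Since all four rules are capacity-wise lexicographic, \emph{capacity-filling} and \emph{gross substitutes} hold immediately (as already noted just before the statement). For \emph{monotonicity} I would invoke Lemma~\ref{insertion}: it suffices to check, for each rule and each $q\in\{2,\dots,n\}$, that the priority list used at capacity $q$ is obtained by insertion from the list at capacity $q-1$. For Walk-Open, Open-Walk and Rotating this is a short parity check (for Rotating the list at $q$ is literally the list at $q-1$ with one further alternating ordering appended, i.e.\ an insertion at the end). For the Compromise rule I would organize the verification into the four cases $q\equiv 0,1,2,3 \pmod 4$ dictated by its definition and, in each case, exhibit the single position at which the new ordering is inserted (a $\succ^w$ into the front block, a $\succ^o$ into the middle block, a $\succ^w$ into the back block, or a $\succ^o$ into the middle block, respectively). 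This step is routine but bookkeeping-heavy.

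Next I would dispatch the Rotating rule. Its list at each capacity $q$ is exactly the length-$q$ prefix of the single profile $(\succ^w,\succ^o,\succ^w,\succ^o,\dots)$, so Rotating is (capacity-constrained) lexicographic for that profile, which satisfies conditions (i)--(ii) of Proposition~\ref{prop_boston}. By Theorem~\ref{thm1}, a lexicographic rule satisfies all of \emph{capacity-filling}, \emph{gross substitutes}, \emph{monotonicity} and \emph{the irrelevance of accepted alternatives}, so Rotating is settled.

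It remains to show that Walk-Open, Open-Walk and Compromise violate \emph{the irrelevance of accepted alternatives}; since each satisfies the other three properties, Theorem~\ref{thm1} then forces each to be non-lexicographic. The guiding observation is that these three rules differ from Rotating precisely in that, at some capacity, the list is obtained by a \emph{strictly internal} insertion rather than by appending at the end; at such a transition the single alternative admitted when capacity rises from $q$ to $q+1$ may be governed by $\succ^w$ for one choice set and by $\succ^o$ for another, depending only on which alternatives were already accepted. I would turn this into an explicit counterexample. For Walk-Open the critical transition is $q=2\to 3$, where a $\succ^w$ is inserted into the middle of $(\succ^w,\succ^o)$: choosing $\succ^w$ and $\succ^o$ that disagree on a pair $\{a,b\}$, together with ``decoy'' alternatives consumed first, I would produce two sets $S,S'$ with $R(S,2)=R(S',2)=\{a,b\}$ but $C(S,3)\cap\{a,b\}=\{a\}$ while $C(S',3)\cap\{a,b\}=\{b\}$, directly contradicting \emph{the irrelevance of accepted alternatives}. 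Open-Walk is handled by the mirror construction (interchanging $\succ^w$ and $\succ^o$), and for Compromise the same phenomenon appears one level deeper, at the transition $q=3\to 4$ where a $\succ^o$ is inserted at position $3$ of $(\succ^w,\succ^o,\succ^w)$; after the first two alternatives (the $\succ^w$-top and the $\succ^o$-top) are peeled off, the residual problem reproduces the Walk-Open situation, so an analogous five-element construction yields two sets with equal rejected sets at $q=3$ but distinct admitted alternatives at $q=4$.

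The main obstacle is the counterexample construction in the last paragraph, and specifically the demand that the two sets have \emph{identical rejected sets} at the lower capacity, not merely an asymmetric revealed preference. Arranging the disagreement $a\succ^w b$ versus $b\succ^o a$ is easy, but forcing $R(S,q)=R(S',q)$ constrains the decoys sharply: the alternative playing the ``already accepted'' role must sit above both $a$ and $b$ in the relevant ordering so that it, rather than $a$ or $b$, is selected first, while $a$ and $b$ stay rejected at capacity $q$ in both sets. Ensuring the decoys are consumed in the right order without accidentally rescuing $a$ or $b$ is the delicate part; once the priorities are pinned down the two choice computations are mechanical. I expect the Compromise case to be the most finicky, since two alternatives must be removed before the Walk-Open-type obstruction is reached.
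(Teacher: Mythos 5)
Your proposal is correct and takes essentially the same route as the paper's proof: capacity-filling and gross substitutes from capacity-wise lexicographicity, monotonicity via the insertion criterion of Lemma~\ref{insertion} (the paper likewise treats this as a routine verification), the Rotating rule settled by observing it is lexicographic for the alternating profile and invoking Theorem~\ref{thm1}, and counterexamples to \emph{the irrelevance of accepted alternatives} at exactly the transitions the paper exploits ($q=2\to 3$ for Walk-Open, its mirror for Open-Walk, and $q=3\to 4$ for Compromise after peeling off the $\succ^w$- and $\succ^o$-tops, just as you describe). You sketch rather than exhibit the explicit sets, but your construction plan---equal rejected sets at the lower capacity with the newly admitted alternative governed by $\succ^w$ in one set and $\succ^o$ in the other---is precisely what the paper's examples instantiate, so the argument goes through.
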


\begin{proof}
Each rule is capacity-wise lexicographic (lexicographic for a given capacity) and therefore satisfies \textit{capacity-filling} and \textit{gross substitutes}. 
Moreover, it is easy to see that each of the four choice rules satisfies the insertion property, so \textit{monotonicity} follows from Lemma~\ref{insertion}.

As for \textit{the irrelevance of accepted alternatives}, first consider $(\succ_1, \ldots ,\succ_n)\in \Pi$ such that the first priority ordering in the list is $\succ^w$, the second is $\succ^o$, the third is $\succ^w$, and so on. The rotating choice rule is clearly lexicographic for $(\succ_1, \ldots ,\succ_n)$. Moreover, by Theorem~\ref{thm1}, it satisfies \textit{the irrelevance of accepted alternatives}. We will show that each of the other three choice rules violates \textit{the irrelevance of accepted alternatives}. 

\noindent \textit{Walk-Open Choice Rule:} Let $A=\{a,b,c,d,e\}$. Let $\succ^w$ be defined as $a\succ^w b \succ^w c \succ^w d \succ^w e$ and $\succ^o$ be defined as $e\succ^o b \succ^o d \succ^o c \succ^o a$. Note that $R(\{a,c,d,e\},2)=R(\{a,b,c,d\},2)=\{c,d\}$. However, $R(\{a,c,d,e\},3)=\{d\}$ and $R(\{a,b,c,d\},3)=\{c\}$, and therefore $C$ violates \textit{the irrelevance of accepted alternatives}.    

\noindent \textit{Open-Walk Choice Rule:} Can be shown by interchanging the orderings for $\succ^w$ and $\succ^o$ in the previous example.    

\noindent \textit{Compromise Choice Rule:} Let $A=\{a,b,c,d,x,y\}$. Let $\succ^w$ be defined as $a\succ^w b \succ^w c \succ^w d \succ^w x \succ^w y$ and $\succ_o$ be defined as $b\succ^o c \succ^o y \succ^o x \succ^o d$. Note that $R(\{a,b,c,x,y\},3)=R(\{a,b,d,x,y\},3)=\{x,y\}$. However, $R(\{a,b,c,x,y\},4)=\{y\}$ and $R(\{a,b,d,x,y\},4)=\{x\}$, and therefore $C$ violates \textit{the irrelevance of accepted alternatives}.    
\end{proof}

\begin{remark}
Note that the particular procedures we introduced to accommodate the cases where the capacity is not divisible by two or four are not crucial for the proof of Proposition~\ref{propp}. For the other procedures (for example, for the walk-open choice rule, the extra priority when the capacity is odd can alternatively be set to be the open priority ordering), the examples in the proof can simply be modified to show that \textit{the irrelevance of accepted alternatives} is still violated. 
\end{remark}

It follows from our Proposition \ref{propp} that if \textit{the irrelevance of accepted alternatives} or having a lexicographic representation under variable capacity constraints is deemed desirable, then the rotating choice rule should be selected since it is the only choice rule among the four plausible choice rules that satisfies \textit{the irrelevance of accepted alternatives} together with \textit{capacity-filling}, \textit{gross substitutes}, and \textit{monotonicity}.

Another interpretation of our Proposition \ref{propp} is the following. First of all, note that in \cite{dur2016}, the capacity is fixed and a choice rule is defined given a capacity. On the other hand, the capacity is allowed to vary and a choice rule has to specify which alternatives are chosen from each choice set \textit{at each possible capacity} in our approach, which is the fundamental difference between \cite{dur2016} and our study.\@ The fact that we allow the capacity to vary and we require a choice rule to respond to changes in the capacity, allows us to define desirable properties of choice rules that address how it should respond to changes in the capacity, such as \textit{monotonicity} and \textit{the irrelevance of accepted alternatives}. Proposition \ref{propp} shows that, although each of the four rules in \cite{dur2016} operate based on a lexicographic procedure when we fix the capacity, in a variable capacity framework only one of them satisfies \textit{the irrelevance of accepted alternatives} and therefore only one of them is capacity-constrained lexicographic under variable capacities, i.e., there exists a priority profile, which has as many priority orderings as the maximum possible capacity, such that at each capacity, the rule operates based on a lexicographic procedure \textit{with respect to the same priority profile}. 

\section{Conclusion}
\label{discussion}
Our formulation of a choice rule and the properties that we consider take into account that the capacity may vary. When designing choice rules especially for resource allocation purposes, such as in school choice, a designer may be interested in choice rules that respond to changes in capacity. In that framework, our Theorem~\ref{thm1} shows that \textit{capacity-filling}, \textit{gross substitutes}, \textit{monotonicity}, and \textit{the irrelevance of accepted alternatives} are altogether satisfied only by (capacity-constrained) lexicographic choice rules, which identifies the properties that distinguish lexicographic choice rules from other plausible choice rules. Besides providing an axiomatic foundation for lexicographic choice rules, this finding may be helpful in applications to select among plausible choice rules, as we have illustrated in Section~\ref{boston}, and also to understand characterizing properties of popular resource allocation mechanisms, as we have illustrated in Section~\ref{allocation_model}.

\bibliographystyle{chicago}
\bibliography{boston}

\begin{appendix}
\section*{Appendix}

\section{Necessity of the Properties in Lemma \ref{lemma_rejection_monotonicity}}
\label{necessity_lemma_1}

The following example shows that \textit{capacity-filling} is necessary for the implication, that is, there exists a choice rule which satisfies \textit{monotonicity} and \textit{CWARP}, but not \textit{the irrelevance of accepted alternatives}.

\begin{example}
Let $A=\{a,b,c\}$. Let $\succ$ be defined as $a\succ b \succ c $. Let the choice rule $C$ be defined as follows. For each problem $(S, q)$, if $a\in S$, then $C(S,q)=\{a\}$; if $a\notin S$, then $C$ chooses the highest $\succ$-priority alternatives from $S$ until $q$ alternatives are chosen or no alternative is left. Note that $C$ clearly violates \textit{capacity-filling} and satisfies \textit{monotonicity} and \textit{CWARP}. Now, let $S=\{a,c\}$ and $S'=\{b,c\}$. Note that $R(S,1)=R(S',1)=\{c\}$, $ C(S,2)\cap R(S,1)=\emptyset$, $C(S',2)\cap R(S',1)=\{c\}$, and therefore $C(S,2)\cap R(S,1)\neq C(S',2)\cap R(S',1)$, implying that $C$ violates \textit{the irrelevance of accepted alternatives}. 
\end{example}

The following example shows that \textit{monotonicity} is necessary for the implication, that is, there exists a choice rule which satisfies \textit{capacity-filling} and \textit{CWARP}, but not \textit{the irrelevance of accepted alternatives}.

\begin{example}
Let $A=\{a,b,c,d\}$. Let $\succ$ and $\succ'$ be defined as $a\succ b \succ c \succ d $ and $b\succ' c \succ' d \succ' a$. Let the choice rule $C$ be defined as follows. For each problem $(S, 1)$, $C$ chooses the highest $\succ$-priority alternatives from $S$; and for each problem $(S, q)$ such that $q>1$, $C$ chooses the highest $\succ'$-priority alternatives from $S$ until $q$ alternatives are chosen or no alternative is left. Note that $C$ clearly satisfies \textit{capacity-filling} and \textit{CWARP}. $C$ violates \textit{monotonicity} because, for instance, $a\in C(\{a,b,c\},1)$ but $a\notin C(\{a,b,c\},2)$. Now, let $S=\{a,c,d\}$ and $S'=\{b,c,d\}$. Note that $R(S,1)=R(S',1)=\{c,d\}$, $ C(S,2)\cap R(S,1)=\{c,d\}$, $C(S',2)\cap R(S',1)=\{c\}$, and therefore $C(S,2)\cap R(S,1)\neq C(S',2)\cap R(S',1)$, implying that $C$ violates \textit{the irrelevance of accepted alternatives}. 
\end{example}

The following example shows that \textit{CWARP} is necessary for the implication, that is, there exists a choice rule which satisfies \textit{capacity-filling} and \textit{monotonicity}, but not \textit{the irrelevance of accepted alternatives}.

\begin{example}
Let $A=\{a,b,c,d\}$. Let $\succ$ and $\succ'$ be defined as $a\succ b \succ c \succ d $ and $a\succ' b \succ' d \succ' c$. Let the choice rule $C$ be defined as follows. For each problem $(S, q)$ such that $a\in S$, $C$ chooses the highest $\succ$-priority alternatives from $S$ until $q$ alternatives are chosen or no alternative is left; and for each problem $(S, q)$ such that $a\notin S$, $C$ chooses the highest $\succ'$-priority alternatives from $S$ until $q$ alternatives are chosen or no alternative is left. Note that $C$ clearly satisfies \textit{capacity-filling} and \textit{monotonicity}. Now, let $S=\{a,c,d\}$ and $S'=\{b,c,d\}$. Note that $R(S,1)=R(S',1)=\{c,d\}$, $ C(S,2)\cap R(S,1)=\{c\}$, $C(S',2)\cap R(S',1)=\{d\}$, and therefore $C(S,2)\cap R(S,1)\neq C(S',2)\cap R(S',1)$, implying that $C$ violates \textit{the irrelevance of accepted alternatives}. 
\end{example}

\section{Independence of Properties in Theorem 1}\label{appendix:independence}

\noindent \textbf{Violating only capacity-filling:} Let $A=\{a,b,c\}$. Let $\succ$ be a priority ordering. Let $C$ be the choice rule such that, for each problem $(S,q)$, $C(S,q)$ is a singleton consisting of the $\succ$-maximal alternative in $S$. Note that $C$ violates \textit{capacity-filling} and clearly satisfies \textit{gross substitutes}. Since the choice does not vary with capacity, $C$ also satisfies \textit{monotonicity} and \textit{the irrelevance of accepted alternatives}.

\noindent \textbf{Violating only gross substitutes:} Let $A=\{a,b,c\}$. Let $\succ$ and $\succ'$ be defined as $a\succ b \succ c$ and $b\succ' a \succ' c$. Let the choice rule $C$ be defined as follows. For each problem $(S, q)$, $C(S,q)$ consists of the $\succ$-maximal alternative in $S$ if $q=1$ and $c\in S$; otherwise, $C(S,q)$ coincides with the choice rule that is responsive for $\succ'$. Note that $C$ satisfies \textit{capacity-filling}. 

Since $a\in C(\{a,b,c\},1)=\{a\}$ and $a\notin C(\{a,b\},1)=\{b\}$, $C$ violates \textit{gross substitutes}. 
To see that     $C$ satisfies \textit{monotonicity}, suppose that there exists a set $S$ and an alternative $x\in S$ such that $x\in C(S,1)$ and $x\notin C(S,2)$. Note that $x\notin C(S,2)$ implies that $x=c$ and $S=\{a,b,c\}$. But then, $x\notin C(S,1)=\{a\}$, a contradiction. To see that $C$ satisfies \textit{CWARP}, note that the revealed preference relation at $q=2$ consists of a unique pair: $b$ is revealed preferred to $c$. Then, by Lemma \ref{lemma_rejection_monotonicity}, $C$ also satisfies \textit{the irrelevance of accepted alternatives}.

\noindent \textbf{Violating only monotonicity:} Let $A=\{a,b,c\}$. Let $\succ$ be defined as $a\succ b \succ c$.  Let the choice rule $C$ be defined as follows. For each problem $(S,q)$, $C(S,q)$ consists of the $\succ$-maximal alternative in $S$ if $q=1$; $C(S,2)=S$ if $|S|=2$; and $C(\{a,b,c\},2)=\{b,c\}$. Note that $C$ satisfies \textit{capacity-filling}. 

Since $a\in C(\{a,b,c\},1)$ and $a\notin C(\{a,b,c\},2)$, $C$ violates monotonicity. For $q=1$, $C$ satisfies \textit{gross substitutes}, since $C$ maximizes $\succ$; for $q\in \{2,3\}$, $C$ clearly satisfies \textit{gross substitutes}. Since there are no two different problems with the same capacity and the same set of rejected alternatives, $C$ satisfies \textit{the irrelevance of accepted alternatives}.

\noindent \textbf{Violating only CWARP:} Note that three of the four rules that we discuss in Section~\ref{boston} satisfy all the properties but \textit{the irrelevance of accepted alternatives}.

\section{Importance of the Irrelevance of Satisfied Demand in Proposition~\ref{ek_corollary}}\label{appendix:example}

We provide an example of a mechanism which satisfies all the properties in the statement of Proposition~\ref{ek_corollary} except for \textit{the irrelevance of satisfied demand}, and therefore which is not a lexicographic deferred acceptance mechanism. The mechanism in the example is a deferred acceptance mechanism based on a choice structure such that the choice rule of each object is a walk-open choice rule. The example uses some arguments from the proof of Proposition~\ref{propp}, where it was shown that the walk-open choice rule violates \textit{CWARP}. 

\begin{example}
Let $N=\{a,b,c,d,e\}$ and let $O$ be a finite set of objects. Let $\succ^w$ be defined as $a\succ^w b \succ^w c \succ^w d \succ^w e$ and $\succ^o$ be defined as $e\succ^o b \succ^o d \succ^o c \succ^o a$. Let $(C_x)_{x\in O}$ be the choice structure such that for each object $x\in O$, $C_x$ is the walk-open choice rule based on $(\succ^w, \succ^o)$. Let $\varphi$ be the deferred acceptance mechanism based on the choice structure $(C_x)_{x\in O}$.  

Since for each $x\in O$, $C_x$ satisfies \textit{capacity-filling}, \textit{gross substitutes}, and \textit{monotonicity}, by Theorem $3$ of \cite{ehlers_klaus_2014b}, $\varphi$ satisfies \textit{unavailable-type-invariance}, \textit{weak non-wastefulness}, \textit{resource-monotonicity}, \textit{truncation-invariance}, and \textit{strategy-proofness}.

Let $x\in O$. Let $q$ be such that $q_x=2$ and for each $y\in O\setminus \{x\}$, $q_y=0$. Let $R$ be such that $x$ is preferred to $\emptyset$ for all the agents except for $b$. Note that $D_x(\varphi(R,q),R)=\{c,d\}$ since $C_x(\{a,c,d,e\},2)=\{a,e\}$. Let $R'$ be such that $x$ is preferred to $\emptyset$ for all the agents except for $e$. Note that $D_x(\varphi(R',q),R')=\{c,d\}$ since $C_x(\{a,b,c,d\},2)=\{a,b\}$. Thus, $D_x(\varphi(R,q),R)=D_x(\varphi(R',q),R')$.

Now, we have $D_x(\varphi(R,q+1_x),R)=\{d\}$ since $C_x(\{a,c,d,e,3\})=\{a,c,e\}$ and $D_x(\varphi(R',q+1_x),R')=\{c\}$ since $C_x(\{a,b,c,d,3\})=\{a,b,d\}$. (Note that when $q=3$, $C_x$ is lexicographic for $(\succ_w,\succ_w,\succ_o)$.) Hence, $\varphi$ violates \textit{the irrelevance of satisfied demand}.
\end{example}

\section{Responsive Choice}
\label{responsive}

A well-known example of a lexicographic choice rule is a ``responsive'' choice rule,\footnote{Responsive choice rules have been studied particularly in the two-sided matching context \citep{roth_sotomayor_1990}.} which is lexicographic for a priority profile where all the priority orderings are the same. Formally, a choice rule $C$ is \textbf{responsive for a priority ordering} $\succ$ if for each $(S,q)\in \mathcal{A} \times \{1,\ldots ,n\}$, $C(S,q)$ is obtained by choosing the highest $\succ$-priority alternatives in $S$ until $q$ alternatives are chosen or no alternative is left. Note that $C$ is responsive for $\succ$ if and only if it is lexicographic for the priority profile $(\succ,\ldots ,\succ)$.

\cite{chambers_yenmez_2018} characterize ``responsive'' choice rules, but in the context of ``classical'' choice problems which do not explicitly refer to a variable capacity parameter. Formally, a classical choice rule is a function $C:\mathcal{A}\rightarrow \mathcal{A}$ such that for each $S\in \mathcal{A}$, $C(S)\subseteq S$. A classical choice rule is \textit{responsive} if there exists a priority ordering $\succ$ and a capacity $q\in \{1,\ldots ,n\}$ such that for each $S\in \mathcal{A}$, $C(S)$ is obtained by choosing the highest $\succ$-priority alternatives until the capacity $q$ is reached or no alternative is left. \cite{chambers_yenmez_2018} show that a classical choice rule satisfies \textit{capacity-filling}\footnote{A classical choice rule satisfies \textit{capacity-filling} if there exists a capacity such that at each choice problem, an alternative is rejected only if the capacity is reached.} and the \textit{weaker axiom of revealed preference (WrARP)} if and only if it is responsive.\footnote{\cite{chambers_yenmez_2018} also provide a characterization of choice rules that are responsive for a known capacity (namely $q$-responsive choice rules).} \textit{WrARP}\footnote{\textit{WrARP} was introduced by \cite{jamison} and also studied by \cite{es2008}.} requires that for each pair $a,b\in A$ and $S,S'\in \mathcal{A}$ such that $a,b\in S\cap S'$ , $$\mbox{if }a\in C(S) \mbox{ and }b\in C(S')\setminus C(S),\mbox{ then }a\in C(S').$$   

To see what \cite{chambers_yenmez_2018} implies in our variable capacity setup, consider the following extension of \textit{WrARP} to our setup.

\noindent \textbf{Weaker axiom of revealed preference (WrARP):} For each $S,S'\in \mathcal{A}$, $q \in \{1,\ldots ,n\}$, and each pair $a,b\in S\cap S'$, $$\mbox{if }a\in C(S,q) \mbox{ and }b\in C(S',q)\setminus C(S,q),\mbox{ then }a\in C(S',q).$$

The following Proposition~\ref{chambers_yenmez} directly follows from \cite{chambers_yenmez_2018}. 

\begin{proposition}
\label{chambers_yenmez}
A choice rule satisfies \textit{capacity-filling} and the \textit{weaker axiom of revealed preference} if and only if for each $q\in \{1,\ldots ,n\}$, there is a priority ordering $\succ^q$ such that for each $S\in \mathcal{A}$, $C(S,q)$ is obtained by choosing the highest $\succ^q$-priority alternatives until the capacity $q$ is reached or no alternative is left. 
\end{proposition}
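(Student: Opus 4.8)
The plan is to reduce the entire statement, one capacity at a time, to the classical characterization of responsive choice rules in \cite{chambers_yenmez_2018}. Fix a capacity $q\in\{1,\ldots,n\}$ and define the classical choice rule $C_q:\mathcal{A}\rightarrow\mathcal{A}$ by $C_q(S)=C(S,q)$. The whole argument rests on the observation that the two variable-capacity hypotheses, once restricted to this single $q$, become precisely the classical hypotheses that \cite{chambers_yenmez_2018} impose on $C_q$, so each direction is essentially a translation between the two frameworks.

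For the only-if direction I would first note that \textit{capacity-filling} of $C$ gives $|C_q(S)|=|C(S,q)|=\min\{|S|,q\}$ for every $S$, so an alternative is rejected by $C_q$ from $S$ only when $|S|>q$; this is exactly classical \textit{capacity-filling} for $C_q$, with the classical capacity equal to $q$. Next, reading the variable \textit{WrARP} of $C$ at this single capacity yields: for all $S,S'$ and all $a,b\in S\cap S'$, if $a\in C_q(S)$ and $b\in C_q(S')\setminus C_q(S)$, then $a\in C_q(S')$, which is verbatim the classical \textit{WrARP} for $C_q$. Applying the result of \cite{chambers_yenmez_2018} then gives a priority ordering $\succ^q$ and a classical capacity for which $C_q(S)$ collects the top $\succ^q$-priority alternatives up to that capacity; since $q$ was arbitrary, this produces the family $(\succ^q)_q$ in the statement.

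For the if-direction, suppose that for each $q$ there is $\succ^q$ representing $C(\cdot,q)$ as in the statement. Then $|C(S,q)|=\min\{|S|,q\}$, so \textit{capacity-filling} is immediate. Moreover, for each fixed $q$ the rule $C_q$ is classically responsive for $\succ^q$, hence satisfies classical \textit{WrARP} by the easy direction of \cite{chambers_yenmez_2018}; as this is exactly the restriction of the variable \textit{WrARP} to capacity $q$ and $q$ is arbitrary, $C$ satisfies \textit{WrARP}.

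Because the argument is a translation, there is no deep obstacle; the one point that needs care is confirming that the classical capacity furnished by \cite{chambers_yenmez_2018} coincides with $q$. This follows because the identity $|C_q(S)|=\min\{|S|,q\}$ holds for every $S$, including sets of size at least $q$ (e.g.\ $S=A$, where it forces exactly $q$ chosen alternatives), so the capacity of the responsive representation is pinned down uniquely to $q$. I would record this matching of capacities explicitly rather than leave it implicit, since it is what guarantees that the representation delivered is precisely $C(S,q)=\text{top }\min\{|S|,q\}\text{ of }\succ^q$ and not a responsive rule with some other capacity.
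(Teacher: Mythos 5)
Your proof is correct and takes essentially the same approach as the paper: the paper offers no separate argument, stating only that the proposition ``directly follows'' from \cite{chambers_yenmez_2018}, and your capacity-by-capacity reduction of $C(\cdot,q)$ to a classical choice rule satisfying classical \textit{capacity-filling} and \textit{WrARP} is exactly that translation. Your explicit verification that the classical capacity of the responsive representation is pinned down to $q$ (via $|C(S,q)|=\min\{|S|,q\}$ applied to large choice sets such as $S=A$) is a detail the paper leaves implicit, and you handle it correctly.
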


Proposition~\ref{chambers_yenmez} states that \textit{capacity-filling} and\textit{WrARP} characterizes ``capacity-wise responsive'' choice rules, which are responsive for each capacity, but the associated priority orderings for different capacities may be different. Yet, a characterization of responsive choice rules in our setup does not directly follow from \cite{chambers_yenmez_2018}.   

We show that, the following extension of \textit{WrARP}, together with \textit{capacity-filling}, characterizes responsive choice rules in our variable-capacity setup. The property, called the \textit{capacity-wise weaker axiom of revealed preference (CWrARP)}, requires that if an alternative $a$ is chosen and $b$ is not chosen at a problem where they are both available, then at any problem where they are both available, $a$ is chosen whenever $b$ is chosen.

\noindent \textbf{Capacity-wise weaker axiom of revealed preference (CWrARP):} For each $S,S'\in \mathcal{A}$, $q,q' \in \{1,\ldots ,n\}$, and each pair $a,b\in S\cap S'$ , $$\mbox{if }a\in C(S,q) \mbox{ and }b\in C(S',q')\setminus C(S,q),\mbox{ then }a\in C(S',q').$$

\begin{theorem}
A choice rule is responsive if and only if it satisfies \textit{capacity-filling} and the \textit{capacity-wise weaker axiom of revealed preference}. 
\end{theorem}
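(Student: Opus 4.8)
The plan is to prove the two directions separately, with essentially all of the work concentrated in the sufficiency direction.

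For necessity, suppose $C$ is responsive for a priority ordering $\succ$. Then \textit{capacity-filling} is immediate, since the responsive procedure always selects exactly $\min\{|S|,q\}$ alternatives. For \textit{CWrARP}, suppose $a\in C(S,q)$ and $b\in C(S',q')\setminus C(S,q)$ with $a,b\in S\cap S'$. Since $a$ is chosen and $b$ is rejected from $S$ under responsiveness, $a$ must have strictly higher $\succ$-priority than $b$, i.e.\ $a\succ b$. But then, since $b\in C(S',q')$ and responsiveness selects the highest $\succ$-priority alternatives, the higher-priority alternative $a\in S'$ must also be chosen, so $a\in C(S',q')$, as required.

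For sufficiency, I would first extract a candidate priority ordering from the binary choices at capacity one: define $a\succ b$ if and only if $C(\{a,b\},1)=\{a\}$. By \textit{capacity-filling}, $|C(\{a,b\},1)|=1$ for every pair, so $\succ$ is complete and antisymmetric. The key step is then the following lemma: for every problem $(S,q)$, if $a\in C(S,q)$ and $b\in S\setminus C(S,q)$, then $a\succ b$---that is, every chosen alternative beats every rejected one under $\succ$. To prove it, suppose toward a contradiction that $C(\{a,b\},1)=\{b\}$. Then $a\in C(S,q)$ while $b\in C(\{a,b\},1)\setminus C(S,q)$, and $a,b\in S\cap\{a,b\}$; applying \textit{CWrARP} to the two problems $(S,q)$ and $(\{a,b\},1)$ forces $a\in C(\{a,b\},1)=\{b\}$, a contradiction.

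With this lemma in hand, transitivity of $\succ$ comes essentially for free: given $a\succ b$ and $b\succ c$, the set $C(\{a,b,c\},1)$ is a singleton whose element, by the lemma, $\succ$-dominates the other two; antisymmetry rules out $b$ and $c$ (these would contradict $a\succ b$ and $b\succ c$ respectively), so the chosen element is $a$, whence $a\succ c$. Thus $\succ$ is a genuine priority ordering. Finally, \textit{capacity-filling} gives $|C(S,q)|=\min\{|S|,q\}$, and since the lemma guarantees that every chosen alternative outranks every rejected one, $C(S,q)$ must be exactly the set of the $\min\{|S|,q\}$ highest $\succ$-priority alternatives of $S$; hence $C$ is responsive for $\succ$. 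The main obstacle is the key lemma, and specifically the idea of using \textit{CWrARP} to compare an arbitrary problem $(S,q)$ against the binary problem $(\{a,b\},1)$ at capacity one: this cross-capacity comparison is exactly where the ``capacity-wise'' strengthening over the fixed-capacity \textit{WrARP} is needed, since it lets the binary comparisons taken at capacity one constrain choices at every other capacity and thereby pin down a single ordering valid across all capacities. Once the lemma is established, transitivity and responsiveness follow mechanically.
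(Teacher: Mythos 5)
Your proof is correct, but it takes a genuinely different route from the paper's. The paper first invokes its Proposition~4 (the variable-capacity restatement of the Chambers--Yenmez result): since \textit{CWrARP} implies \textit{WrARP}, \textit{capacity-filling} and \textit{WrARP} yield, for each capacity $q$, a priority ordering $\succ^q$ for which $C(\cdot,q)$ is responsive; the remaining work is a gluing argument, where \textit{CWrARP} first delivers $C(S,q-1)\subseteq C(S,q)$ (a monotonicity step, since otherwise \textit{capacity-filling} produces a swapped pair contradicting \textit{CWrARP}) and then pins down the ranking of newly accepted over still-rejected alternatives, showing inductively that the capacity-indexed orderings can all be taken equal to $\succ^1$. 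You instead bypass the per-capacity characterization entirely: you define $\succ$ directly from binary menus at capacity one, prove the key lemma that every chosen alternative $\succ$-dominates every rejected one by applying \textit{CWrARP} to the cross-capacity pair of problems $(S,q)$ and $(\{a,b\},1)$, obtain transitivity from three-element menus at capacity one, and conclude via \textit{capacity-filling} that $C(S,q)$ is the top $\min\{|S|,q\}$ set. Your argument is self-contained and more elementary---it does not rely on the \textit{WrARP} characterization of capacity-wise responsive rules---and your observation that the cross-capacity comparison against capacity-one binary menus is exactly where \textit{CWrARP} exceeds \textit{WrARP} is the right diagnosis. What the paper's route buys in exchange is expository: by factoring the proof through Proposition~4, it isolates precisely the gap between the fixed-capacity axiom (\textit{WrARP}, giving possibly different orderings $\succ^q$ at each capacity) and the capacity-wise strengthening (which forces a single ordering), which is the conceptual point of that appendix.
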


\begin{proof}
It is clear that a responsive choice rule satisfies \textit{capacity-filling} and \textit{CWrARP}. Let $C$ be a choice rule satisfying \textit{capacity-filling} and \textit{CWrARP}. Clearly, \textit{CWrARP} implies \textit{WrARP}, and therefore by Proposition~\ref{chambers_yenmez}, for each $q\in \{1,\ldots ,n\}$, there is a priority ordering $\succ^q$ such that for each $S\in \mathcal{A}$, $C(S,q)$ is obtained by choosing the highest $\succ^q$-priority alternatives until the capacity $q$ is reached or no alternative is left. 

Let $(S,q)\in \mathcal{A}\times \{1,\ldots ,n\}$. If $|S|\leq q$, then by \textit{capacity-filling}, $C(S,q)=S$. Suppose that $|S|>q$. First note that $C(S,q-1)\subseteq C(S,q)$, since otherwise, by \textit{capacity-filling}, there is a pair $a,b\in S$ such that $a\in C(S,q-1)\setminus C(S,q)$ and $b\in C(S,q)\setminus C(S,q-1)$, which contradicts \textit{CWrARP}. Now, consider any pair $a,b\in R(S,q-1)$ such that $a\in C(S,q)$ and $b\notin C(S,q)$. By \textit{CWrARP}, for any $S'\in \mathcal{A}$, $b$ is not chosen over $a$ at $(S',q)$, implying that $a$ has $\succ^q$-priority over $b$. But then, for each $S\in \mathcal{A}$, $C(S,q)$ is obtained by choosing the highest $\succ^{q-1}$-priority alternatives until the capacity $q$ is reached or no alternative is left. Since we started with an arbitrary $q\in \{1,\ldots ,n\}$, $C$ is a choice rule that is responsive to $\succ^1$.
\end{proof}

\end{appendix}

\end{document}